\newtheorem{prop}{Proposition}
\newenvironment{proof}[1][Proof]{\noindent\textbf{#1.} }{\ \rule{0.5em}{0.5em}}
\begin{document}

\preprint{APS/123-QED}

\title{A Dynamical Systems Approach to a \\ Bianchi Type I Viscous Magnetohydrodynamic Model}% Force line breaks with \\
%add full addresses!!
\author{Ikjyot Singh Kohli}
\email{isk@yorku.ca}
\affiliation{York University - Department of Physics and Astronomy}%Lines break automatically or can be forced with \\
\author{Michael C. Haslam}
\email{mchaslam@mathstat.yorku.ca}
\affiliation{
York University - Department of Mathematics and Statistics
}

\date{April 1, 2013}% It is always \today, today,
            %  but any date may be explicitly specified

\begin{abstract}
We use the expansion-normalized variables approach to study the dynamics of a non-tilted Bianchi Type I cosmological model with both a homogeneous magnetic field and a viscous fluid. In our model the perfect magnetohydrodynamic approximation is made, and both bulk and shear viscous effects are retained. The dynamical system is studied in detail through a fixed-point analysis which determines the local sink and source behavior of the system. We show that the fixed points may be associated with Kasner-type solutions, a flat universe FLRW solution, and interestingly, a new solution to the Einstein Field equations involving non-zero magnetic fields, and non-zero viscous coefficients. 
It is further shown that for certain values of the bulk and shear viscosity and equation of state parameters, the model isotropizes at late times.
\end{abstract}
\maketitle 

%%%%%%%%%%%%%%%%%%%%%%%%%%%%%%%%%%%%%%%%%%%%%%%%%%%%%%%%%%%%%%%%%%%%%%%%%%%%%%%%%%%%%%%
%%%%%%%%%%%%%%%%%%%%%%%%%%%%%%%%%%%%%%%%%%%%%%%%%%%%%%%%%%%%%%%%%%%%%%%%%%%%%%%%%%%%%%%

\section{Introduction}

The current standard model of cosmology based on the Friedmann-LeMaitre-Robertson-Walker (FLRW) metric assumes that the present-day universe is spatially homogeneous and isotropic, and indeed this assumption strongly concurs with empirical observation. 
As a result of the symmetry of this spacetime, related models must be treated within the framework of perfect fluids, in which case the shear and rotational terms in the energy-momentum tensor vanish (page 52, \cite{ellis}).

If one wishes to formulate a cosmological model of the early universe, however, at a minimum it is necessary to include viscous (shear) terms in the energy-momentum tensor. 
As discussed by Gr{\o}n and Hervik (Chapter 13, \cite{hervik}), viscous models have become of general interest in early-universe cosmologies largely in two contexts.  Firstly, in models where bulk viscous terms dominate over shear terms, the universe expands to a de Sitter-like state, which is a spatially flat universe neglecting ordinary matter, and including only a cosmological constant. Such models isotropize indirectly through the massive expansion. Secondly, in the absence of any significant heat flux, shear viscosity is found to play an important role in models of the universe at its early stages. In particular, neutrino viscosity is considered to be one of the most important factors in the isotropization of our universe.  

Magnetic fields have also been thought to play a major role in the early universe. Grasso and Rubinstein \cite{grassorub} reviewed in great detail the origin and possible effects of magnetic fields in the early universe. In recent work, Ando and Kusenko \cite{andokusenko}, examined intergalactic magnetic fields and discussed how these magnetic fields originated from primordial seed fields created shortly after the big bang, which relates to our understanding of the origin of cosmic magnetic fields in the early universe. In addition, Gregori et al. \cite{gregorietal} also studied the origin of galactic magnetic fields through the amplification of primordial seed fields. Schlickeiser \cite{Schlickeiser} described a new process by which the primordial magnetic fields arose in the universe before the emergence of the first stars.

After inflation the early universe was a good conductor: even though the number density of free electrons dropped dramatically during recombination, its residual value was enough to maintain high conductivity in baryonic matter. As a result, cosmic magnetic fields have remained frozen into the expanding baryonic fluid during most of their evolution. In this situation, one can analyze the magnetic effects on the dynamics of the early universe through ideal magnetohydrodynamics (hereafter referred to as MHD), in which case the magnetic field source is considered to be a perfect conductor and related terms in the energy momentum tensor are simply those corresponding to a classical magnetic field (Page 115, \cite{elliscosmo}).

Hughston and Jacobs \cite{hughstonjacobs} showed that in the case of a pure magnetic field, only Bianchi Types I, II, VI($h = -1$) (which is the same as Type III), and VII ($h=0$) admit field components, whereas Types IV, V, VI ($h=-1$), VII ($h\neq0$), VIII, and IX admit no field components. These results led to a number of papers of Bianchi models with a perfect-fluid magnetic field source; we discuss these works briefly below. Using a dynamical systems approach LeBlanc \cite{leblanc1} studied Bianchi Type II magnetic cosmologies in which he provided an analysis on the future and past asymptotic states of the resulting dynamical system. In a separate work, LeBlanc \cite{leblanc2} also studied the asymptotic states of magnetic perfect-fluid Bianchi Type I cosmologies. Using phase plane analysis techniques, Collins \cite{collins} studied the behavior of a class of perfect-fluid anisotropic cosmological models, and established a correspondence between magnetic models of Bianchi Type I and perfect fluid models of Bianchi Type II. In addition, LeBlanc, Kerr, and Wainwright \cite{leblanc3} studied the asymptotic states of magnetic Bianchi Type VI cosmologies and showed that there is a finite probability that an arbitrarily selected model will be close to isotropy during some time interval in its evolution. We also note that Barrow, Maartens, and Tsagas \cite{barrowmaartenstsagas} did significant work in the reformulation of a $1+3$ covariant description of the magnetohydrodynamic equations that has provided further understanding and clarity on the role of large-scale electromagnetic fields in the perturbed Friedmann-LeMaitre-Robertson-Walker models.

Viscous MHD Bianchi models treated using a metric approach have appeared in the literature on a number of occasions. van Leeuwen and Salvati \cite{vanLeeuwen1} studied the dynamics of general Bianchi class A models containing a magneto-viscous fluid and a large-scale magnetic field. Banerjee and Sanyal \cite{banerjeesanyal} presented some exact solutions of Bianchi Types I and III cosmological models consisting of a viscous fluid and axial magnetic field. Benton and Tupper \cite{bentontupper} studied Bianchi Type I models with a ``powers-of-t'' metric under the influence of a viscous fluid with a magnetic field . Salvati, Schelling, and van Leeuwen \cite{vanLeeuwen2} numerically analyzed the evolution of the Bianchi type I universe with a viscous fluid and large-scale magnetic field. Ribeiro and Sanyal \cite{ribeirosanyal} studied a Bianchi Type $VI_{0}$ viscous fluid cosmology with an axial magnetic field in which they obtained exact solutions to the Einstein field equations assuming linear relations among the square root of matter density and the shear and expansion scalars. van Leeuwen, Miedema, and Wiersma \cite{vanLeeuwen4} proved that a non-rotating Bianchi model of class A containing a viscous fluid and magnetic field can only be of Type I or $IV_{0}$. Pradhan and Pandey \cite{pradhanpandey} studied the Bianchi Type I model with a bulk viscous fluid in addition to a varying cosmological constant. Pradhan and Singh \cite{pradhansingh} studied the Bianchi Type I model in the presence of a magnetic field and shear and bulk viscosity, but assumed that the shear tensor was proportional to the expansion tensor. Bali and Anjali \cite{balianjali} studied a Bianchi Type I magnetized fluid model with a bulk viscous string dust fluid, in which they compared their results in the presence and absence of large-scale magnetic fields. 

In this paper we examine a viscous MHD Bianchi Type I non-tilted viscous magnetohydrodynamic model. In contrast to the references cited above, which use a metric approach, we use the Hubble-normalized dynamical systems approach based upon the theory of orthonormal frames pioneered by Ellis and MacCallum \cite{ellismac}. In treating a problem with the method of Ellis and MacCallum, the Einstein field equations (a coupled set of ten hyperbolic nonlinear partial differential equations) are reduced to a system of autonomous nonlinear first-order ordinary differential equations. In a previous work \cite{isk1}, we employed such an approach to treat a Bianchi Type IV viscous model in the absence of magnetic sources. To the best of our knowledge, a treatment of a viscous MHD model along these lines has not yet appeared in the literature. In the present work, we examine the important role of the fixed points of the dynamical system. In particular we show that the fixed points may be associated with Kasner-type solutions, a flat universe FLRW solution, and interestingly, a new solution to the Einstein Field equations involving non-zero magnetic fields, and non-zero viscous coefficients. We examine several features of the dynamical system, including its early and late time asymptotic behavior, and its bifurcation behavior. Finally, numerical results are presented which illustrate the behavior of the system over long times with several initial configurations. In several cases of interest, it is shown that the dynamical model isotropizes asymptotically; that is, the spatial anisotropy and the anisotropic magnetic field decay to negligible values giving a close approximation to the present-day universe. Throughout this work, we assume that the signature of the metric tensor is $(-,+,+,+)$, and the use of \emph{geometrized units}, where $G = c = 1$.

%%%%%%%%%%%%%%%%%%%%%%%%%%%%%%%%%%%%%%%%%%%%%%%%%%%%%%%%%%%%%%%%%%%%%%%%%%%%%%%%%%%%%%%
%%%%%%%%%%%%%%%%%%%%%%%%%%%%%%%%%%%%%%%%%%%%%%%%%%%%%%%%%%%%%%%%%%%%%%%%%%%%%%%%%%%%%%%

\section{The Matter Sources}

In the absence of heat conduction, the energy-momentum tensor corresponding to a 
viscous fluid cosmological model with fluid velocity four-vector $u_a$ is given by \cite{isk1}
\begin{equation}
\label{eq:visctensor}
\mathcal{V}_{ab} = (\mu_{f} + p_{f})u_{a}u_{b} + g_{ab}p_{f} - 3\xi H h_{ab} - 2\eta \sigma_{ab},
\end{equation}
where $\mu_{f}$,  $p_{f}$, and $\sigma_{ab}$ denote the fluid's energy density, pressure, and shear tensor, respectively. In addition, the quantities $\xi$ and $\eta$ denote the bulk and shear viscosity coefficients of the fluid, respectively, $H$ denotes the Hubble parameter, and $h_{ab} \equiv u_{a}u_{b} + g_{ab}$ denotes the projection tensor corresponding to the metric signature $(-,+,+,+)$. 

The energy-momentum tensor corresponding to an electromagnetic field is given by
\cite{elliscargese}
\begin{equation}
\label{eq:emtensor}
\mathcal{T}_{ab} = \frac{1}{2}u_{a}u_{b}(E^2 + B^2) + 2u_{(a} n_{b)}^{cgd}u_{c}E_{g}B_{d} - E_{a}E_{b} - B_{a}B_{b} + \frac{1}{2}h_{ab}\left(E^2 + B^2\right),
\end{equation}
where $n^{abcd}$ is the standard skew pseudo-tensor, and $E_{a}$ and $B_{a}$ are the electric and magnetic field three-vectors, respectively. Note that in an orthonormal frame, where $g_{ab} = n_{ab} = diag(-1,1,1,1)$,  the $E^2$ and $B^2$ terms in Eq. \eqref{eq:emtensor}, take the form $E^2 \equiv E^{a}E_{a} = E_{1}^{2} + E_{2}^2 + E_{3}^2$, and $B^2 \equiv B^{a}B_{a} = B_{1}^2 + B_{2}^2 + B_{3}^2$. In this work, we assume that the cosmological model is non-tilted, and thus in both Eqs. \eqref{eq:visctensor} and \eqref{eq:emtensor} we take $u_{a}$ as the four-velocity of a comoving observer $u^{a} = (1,0,0,0)$. We also assume the ideal MHD approximation, in which case the early universe behaves as a \emph{perfect} conductor. The electric field (whose magnitude is inversely proportional to the conductivity) approaches zero, even in the presence of a non-zero electric current. In other words, we assume that after recombination, the universe is such a good conductor that the cosmic electric fields required to drive a current in it are negligible. Under these conditions, the energy-momentum tensor in Eq. \eqref{eq:emtensor} simplifies to
\begin{equation}
\label{eq:Btensor}
\mathcal{T}_{Bab} = \frac{1}{2}u_{a}u_{b}\left(B^{2}\right) - B_{a}B_{b} + \frac{1}{2}h_{ab}B^2.
\end{equation}
The total energy-momentum tensor, denoted $\mathbb{T}_{ab}$, for our cosmological model is then given by
\begin{equation}
\label{eq:emtotal}
\mathbb{T}_{ab} = \mathcal{V}_{ab} + \mathcal{T}_{Bab}.
\end{equation}

In order to formulate the evolution equations corresponding to our model, we compute from Eq. \eqref{eq:emtotal} the total energy density $\tilde{\mu}$, the total pressure $\tilde{p}$, and total anisotropic stress $\tilde{\pi}_{ab}$. Using the definitions
\begin{equation}
\tilde{\mu} = \mathbb{T}_{ab}u^{a}u^{b}, \quad \tilde{p} = \frac{1}{3}h^{ab}\mathbb{T}_{ab}, \quad \tilde{\pi}_{ab} = h^{c}_{a}h^{d}_{b}\mathbb{T}_{cd} - \tilde{p}h_{ab},
\end{equation}
we find that
\begin{equation}
\label{eq:mutotal}
\tilde{\mu} = \mu_{f} + \frac{1}{2}\left(B_{1}^2 + B_{2}^2 + B_{3}^2\right),
\end{equation}
\begin{equation}
\label{eq:ptotal}
\tilde{p}= w \mu_{f} - 3\xi H + \frac{1}{6} \left(B_{1}^2 + B_{2}^2 + B_{3}^2 \right),
\end{equation}
and
\begin{equation}
\label{eq:pitotal}
 \tilde{\pi}_{ab} = -2 \eta \sigma_{ab} - B_{a}B_{b} + \frac{1}{3}h_{ab} \left(B_{1}^2 + B_{2}^2 + B_{3}^2\right).
\end{equation}
Note that in obtaining the expression for the pressure in Eq. \eqref{eq:ptotal}, we assumed that the fluid obeys the barotropic equation of state, $p_{f} = w \mu_{f}$, where $-1 \leq w \leq 1$.

It is advantageous to re-express the above quantities as expansion-normalized variables
\cite{hewittbridsonwainwright} and we thus introduce the definitions 
\begin{equation}
\label{eq:expansource}
\tilde{\Omega} = \frac{\tilde{\mu}}{3H^2}, \quad \tilde{P} = \frac{\tilde{p}}{3H^2}, \quad \tilde{\Pi}_{ab} = \frac{\tilde{\pi}_{ab}}{H^2}.
\end{equation}
We will also define the expansion-normalized magnetic field vector as
\begin{equation}
\label{eq:bfieldnormal}
\mathcal{B}_{a} = \frac{B_{a}}{3H}.
\end{equation}
The relevant expressions for the expansion-normalized variables are then given by 
\begin{equation}
\label{eq:omegaT}
\tilde{\Omega}  = \Omega_{f} + \frac{3}{2}\left(\mathcal{B}_{1}^2 + \mathcal{B}_{2}^2 + \mathcal{B}_{3}^2\right),
\end{equation}
\begin{equation}
\label{eq:PT}
 \tilde{P} = w \Omega_{f} - 3 \xi_{0} + \frac{1}{2}\left(\mathcal{B}_{1}^2 + \mathcal{B}_{2}^2 + \mathcal{B}_{3}^2\right),
\end{equation}
and
\begin{equation}
\label{eq:PiT}
\tilde{\Pi}_{ab}= -2\eta_{0} \Sigma_{ab} - 9\mathcal{B}_{a}\mathcal{B}_{b} + 3 \delta_{ab}\left(\mathcal{B}_{1}^2 + \mathcal{B}_{2}^2 + \mathcal{B}_{3}^2\right).
\end{equation}
In Eqs. \eqref{eq:omegaT}, \eqref{eq:PT} and \eqref{eq:PiT}, $\Omega_f = \mu_f/(3H^2)$ is Hubble-normalized fluid energy density, and $\xi_{0} = \xi/(3H)$ and $\eta_{0}=\eta/(3H)$ are the expansion-normalized bulk and shear viscosity coefficients, respectively; these quantities are assumed to be \emph{non-negative constants} throughout this paper. In Eq. \eqref{eq:PiT} we also denote $\Sigma_{ab} = \sigma_{ab}/H$ as the expansion-normalized shear tensor.

%%%%%%%%%%%%%%%%%%%%%%%%%%%%%%%%%%%%%%%%%%%%%%%%%%%%%%%%%%%%%%%%%%%%%%%%%%%%%%%%%%%%%%%
%%%%%%%%%%%%%%%%%%%%%%%%%%%%%%%%%%%%%%%%%%%%%%%%%%%%%%%%%%%%%%%%%%%%%%%%%%%%%%%%%%%%%%%

\section{Bianchi Type I Universe Dynamics}

With the required energy-momentum tensor in Eq. \eqref{eq:emtotal}, and the expansion-normalized source variables (Eqs. (\ref{eq:omegaT}) - (\ref{eq:PiT})) in hand, we now derive the Bianchi Type I dynamical equations.  The general evolution equations for any Bianchi type are presented in \cite{hewittbridsonwainwright} and \cite{herviklim}. The general evolution equations in the expansion-normalized variables using our notation are:
\begin{eqnarray}
\label{eq:evolutionsys1}
\Sigma_{ij}' &=& -(2-q)\Sigma_{ij} + 2\epsilon^{km}_{(i}\Sigma_{j)k}R_{m} - \mathcal{S}_{ij} + \tilde{\Pi}_{ij} \nonumber \\
N_{ij}' &=& qN_{ij} + 2\Sigma_{(i}^{k}N_{j)k} + 2 \epsilon^{km}_{(i}N_{j)k}R_{m} \nonumber \\
A_{i}' &=& qA_{i} - \Sigma^{j}_{i}A_{j} + \epsilon_{i}^{km}A_{k} R_{m}\nonumber \\
\tilde{\Omega}' &=& (2q - 1)\tilde{\Omega} - 3\tilde{P} - \frac{1}{3}\Sigma^{j}_{i}\tilde{\Pi}^{i}_{j} + \frac{2}{3}A_{i}Q^{i} \nonumber \\
Q_{i}' &=& 2(q-1)Q_{i} - \Sigma_{i}^{j}Q_{j} - \epsilon_{i}^{km}R_{k}Q_{m} + 3A^{j}\tilde{\Pi}_{ij} + \epsilon_{i}^{km}N_{k}^{j}\tilde{\Pi}_{jm}.
\end{eqnarray}
These equations are subject to the constraints
\begin{eqnarray}
\label{eq:constraints1}
N_{i}^{j}A_{j} &=& 0, \nonumber \\
\tilde{\Omega} &=& 1 - \Sigma^2 - K, \nonumber \\
Q_{i} &=& 3\Sigma_{i}^{k} A_{k} - \epsilon_{i}^{km}\Sigma^{j}_{k}N_{jm}.
\end{eqnarray}
As in Eq. \eqref{eq:expansource}, we have made use of the following notation:
\begin{equation}
\label{eq:notation1}
\left(\Sigma_{ij}, R^{i}, N^{ij}, A_{i}\right) = \frac{1}{H}\left(\sigma_{ij}, \Omega^{i}, n^{ij}, a_{i}\right) , \quad \left(\tilde{\Omega}, \tilde{P}, Q_{i}, \tilde{\Pi}_{ij}\right) = \frac{1}{3H^2}\left(\tilde{\mu}, \tilde{p}, q_{i}, \tilde{\pi}_{ij}\right).
\end{equation}
In the expansion-normalized approach, the kinematic shear tensor $\Sigma_{ab}$ describes the anisotropy in the Hubble flow, $A_{i}$ and $N^{ij}$ describe the spatial curvature, while $\Omega^{i}$ describes the relative orientation of the shear and spatial curvature eigenframes. The Bianchi Type I model is a flat anisotropic model and is Abelian, and therefore has the property that
\begin{equation}
\label{eq:bianchi1}
A^{i} = 0, \quad N_{11} = N_{22} = N_{33} = 0.
\end{equation}

The dynamical system \eqref{eq:evolutionsys1} evolves according to a dimensionless time variable, $\tau$ such that
\begin{equation}
\label{eq:Hdef1}
\frac{dt}{d\tau} = \frac{1}{H},
\end{equation}
where $H$ is the Hubble parameter with evolution equation 
\begin{equation}
\label{eq:Hdef2}
H' = -(1 + q) H.
\end{equation}
The deceleration parameter $q$ is very important in the expansion-normalized approach:
when $q<-1$ the universe expansion is accelerating, when $q>-1$ the universe expansion is decelerating, and when $q=-1$ the universe is static, that is, it is not self-similar. From Eq. (1.90) in \cite{ellis}, and using Eq. \eqref{eq:notation1}, the parameter $q$ may be written as 
\begin{eqnarray}
\label{eq:qdef}
q &\equiv& 2\Sigma^2 + \frac{1}{2}\left(\tilde{\Omega} + 3\tilde{P}\right) \nonumber \\
&=& 2\Sigma^{2} + \Omega_{f} \left(\frac{1}{2} + \frac{3w}{2}\right) - \frac{9}{2}\xi_{0} + \frac{3}{2}\left(\mathcal{B}_{1}^2 + \mathcal{B}_{2}^2 + \mathcal{B}_{3}^2\right), 
\end{eqnarray}
where $2\Sigma^{2} \equiv  \left(\Sigma_{ab} \Sigma^{ab}\right)/3$. 

In the case of a magnetic field source, one must also include an evolution equation for the magnetic field, which is the orthonormal frame analog of the standard Maxwell-Faraday equation. According to Eq. (71) in \cite{vanelstuggla}, Eq. (2.4) in \cite{leblanc3}, and Eqs. \eqref{eq:bfieldnormal}, \eqref{eq:notation1}, \eqref{eq:Hdef1}, and \eqref{eq:Hdef2} above, the magnetic field evolution is given by
\begin{equation}
\label{eq:bfieldevolution}
\mathcal{B}_{a}' = \mathcal{B}_{a}\left(-1+q\right) + \Sigma_{ab}\mathcal{B}^{b} + \epsilon_{abv}R^{v}\mathcal{B}^{b}.
\end{equation}

For convenience, we introduce the notation
\begin{equation}
\label{eq:sheardefs}
\Sigma_{+} = \frac{1}{2} \left(\Sigma_{22} + \Sigma_{33}\right), \quad \Sigma_{-} = \frac{1}{2\sqrt{3}}\left(\Sigma_{22} - \Sigma_{33}\right),
\end{equation}
such that $\Sigma^{2} = \Sigma_{+}^2 + \Sigma_{-}^2$.
In the evolution equations \eqref{eq:evolutionsys1}, the expansion-normalized angular velocity variables $R_{a}$ can be found from the non-diagonal shear equations, $\Sigma_{12}', \Sigma_{23}'$, and $\Sigma_{13}'$. From these equations, we get that
\begin{equation}
\label{eq:Rdefs}
R_{1} = -\frac{3\sqrt{3}\mathcal{B}_{2} \mathcal{B}_{3}}{2\Sigma_{-}}, \quad R_{2} = \frac{9\mathcal{B}_{1} \mathcal{B}_{3}}{\sqrt{3}\Sigma_{-} - 3\Sigma_{+}}, \quad R_{3} = \frac{9\mathcal{B}_{1} \mathcal{B}_{2}}{\sqrt{3}\Sigma_{-} + 3\Sigma_{+}}.
\end{equation}
To avoid situations where $R_{1}$, $R_{2}$, or $R_{3}$ become singular, we will set $\mathcal{B}_{1} = \mathcal{B}_{3} = 0$, and keep $\mathcal{B}_{2} \neq 0$, hence assuming that the magnetic field acts in a single spatial direction,  as is done in \cite{tsagasmaartens}, \cite{doroshkevich}, \cite{collins}, and \cite{thorne2}. 
Then, $R_{1} = R_{2} = R_{3} = 0$, and according to Eqs. \eqref{eq:Rdefs}, \eqref{eq:sheardefs}, \eqref{eq:bianchi1}, and \eqref{eq:qdef}, the evolution equations \eqref{eq:evolutionsys1} become: %%%left off here... fix the evolution equations and the constraint!!
\begin{eqnarray}
\label{eq1:b1evolutionsys}
\Sigma_{+}' &=& -\frac{3}{2}\mathcal{B}_{2}^2 + \Sigma_{+}\left[q - 2\left(1 + \eta_{0} \right)\right], \\
\label{eq2:b1evolutionsys}
\Sigma_{-}'  &=& -\frac{3 \sqrt{3}}{2} \mathcal{B}_{2}^2 + \Sigma_{-}\left[q - 2\left(1 + \eta_{0} \right)\right], \\
\label{eq3:b1evolutionsys}
\mathcal{B}_{2}' &=& \mathcal{B}_{2} \left(-1 + q + \sqrt{3}\Sigma_{-} + \Sigma_{+}\right),
\end{eqnarray}
where the deceleration parameter is now given by
\begin{equation}
\label{eq:qdef2}
q = 2\left(\Sigma_{+}^2 + \Sigma_{-}^2\right)+ \Omega_{f} \left(\frac{1}{2} + \frac{3w}{2}\right) - \frac{9}{2}\xi_{0} + \frac{3}{2}\mathcal{B}_{2}^2.
\end{equation}
In Eq. \eqref{eq:qdef2} we have defined the energy density as
\begin{equation}
\label{eq4:b1evolutionsys}
\Omega_{f} = 1 - \frac{3}{2} \mathcal{B}_{2}^2 - \Sigma_{-}^2 - \Sigma_{+}^2 \geq 0,
\end{equation}
which, as indicated in Eq. \eqref{eq4:b1evolutionsys}, is restricted to be non-negative on physical grounds. After some algebra, the auxiliary equation in \eqref{eq:evolutionsys1} becomes
\begin{equation}
\label{eq:OmegafP}
\Omega_{f}' = \Omega_{f} \left(2q-1-3w\right) + 4 \eta_{0} \left(\Sigma_{+}^{2} + \Sigma_{-}^2\right) + 9\xi_{0}.
\end{equation}
In seeking solutions to \eqref{eq1:b1evolutionsys}, \eqref{eq2:b1evolutionsys} and 
\eqref{eq3:b1evolutionsys}, we further enforce the physical restrictions
\begin{equation}
\label{eq:restrict1}
-1\leq w \leq 1, \quad \xi_{0} \geq 0, \quad \eta_{0} \geq 0,
\end{equation}
on the state parameter, bulk and shear viscosity coefficients, respectively. Any combinations of these parameters must additionally satisfy $\Omega_{f} \geq 0$, $\Sigma_{+} \in \mathbb{R}, \Sigma_{-} \in \mathbb{R}$, and $\mathcal{B}_{2} \geq 0 \in \mathbb{R}$.

%%%%%%%%%%%%%%%%%%%%%%%%%%%%%%%%%%%%%%%%%%%%%%%%%%%%%%%%%%%%%%%%%%%%%%%%%%%%%%%%%%%%%%%
%%%%%%%%%%%%%%%%%%%%%%%%%%%%%%%%%%%%%%%%%%%%%%%%%%%%%%%%%%%%%%%%%%%%%%%%%%%%%%%%%%%%%%%

\section{A Fixed Point Analysis}

We now consider the local stability of the equilibrium points of the system \eqref{eq1:b1evolutionsys}-\eqref{eq3:b1evolutionsys}, which we abbreviate as
\begin{equation}
\label{eq:basedef}
\mathbf{x}' = \mathbf{f(x)}.
\end{equation}
Here $\mathbf{x} = \left[\Sigma_{+}, \Sigma_{-}, \mathcal{B}_{2}\right] \in \mathbf{R}^{3}$, and  the vector function $\mathbf{f(x)}$ denotes the right-hand-side of the dynamical system. The state space of the system is the subset of $\mathbb{R}^{3}$ defined by the inequality in Eq. \eqref{eq4:b1evolutionsys}, which is equivalent to
\begin{equation}
\label{eq:ineq1}
\Sigma_{+}^2 + \Sigma_{-}^2 + \frac{3}{2}\mathcal{B}_{2}^2 \leq 1,
\end{equation}
so the state space is clearly bounded. This inequality also is a constraint for the initial conditions of the dynamical system. There is only one symmetry of the dynamical system, given by
\begin{equation}
\label{eq:symm1}
\left[\Sigma_{+}, \Sigma_{-}, \mathcal{B}_{2}\right] \rightarrow \left[\Sigma_{+}, \Sigma_{-}, -\mathcal{B}_{2}\right].
\end{equation}
The system is therefore invariant with respect to spatial inversions in the function $\mathcal{B}_{2}$, and we can take $\mathcal{B}_{2} \geq 0$. In most cases, we examine the stability of the critical points $\mathbf{a}$ where $\mathbf{f}(\mathbf{a}) = 0$ by locally linearizing the system leading to the relationship $\mathbf{x}' = D \mathbf{f(a)} \mathbf{x}$. The stability of the system is then determined the sign of the eigenvalues of the Jacobian matrix $D \mathbf{f(a)}$. In the work that follows, we will denote eigenvalues of the dynamical system by $\lambda_{i}$, where $i = 1,2,3,...$.

\subsection{Kasner Equilibrium Points}

We now discuss a set of equilibrium points which are known as the Kasner solutions to the system \cite{ellis}. Each such equilibrium point corresponds to a vacuum solution and is unstable for our model. These equilibrium points, the set of which we denote ${\mathcal{K}}$, lie on the {\em Kasner circle}
\begin{equation}
\Sigma_{-}^2 + \Sigma_{+}^2 = 1
\end{equation}
in the plane $\mathcal{B}_{2} = 0$ for parameter values $\xi_{0} = \eta_{0} = 0$, and $-1 \leq w \leq 1$. The cosmological parameters at every point on the Kasner circle are
\begin{equation}
\Omega_{f} = 0, \quad q = 2, \quad \Sigma^2 = 1.
\end{equation}
The eigenvalues of the Jacobian matrix at each point are
\begin{equation}
\label{eq:Kasner_lambda}
\lambda_{1} = 0, \quad \lambda_{2} = 3(1-w), \quad \lambda_{3} = 1 + \Sigma_{+} - \sqrt{3(1-\Sigma_{+}^2)}.
\end{equation}
As can be seen from Eq. \eqref{eq:Kasner_lambda} when $w = 1$ two of the eigenvalues are zero, and these equilibrium points are not normally hyperbolic. One can therefore not use linearization methods to determine the local asymptotic behavior. In the following discussion we restrict our attention to the parameter region defined by $-1 \leq w < 1$.

Let us parametrize the Kasner circle points using the polar angle $\psi$ as is done in \cite{ellis}:
\begin{equation}
\Sigma_{+} = \cos \psi, \quad \Sigma_{-} = \sin \psi, \quad -\pi < \psi \leq \pi.
\end{equation}
The Kasner exponents $p_1$, $p_2$, and $p_3$ of the Kasner metric
\begin{equation}
ds^2 = -dt^2 + t^{2p_1} dx^2 + t^{2p_2} dy^2 + t^{2p_3} dz^2
\end{equation}
are then given by
\begin{equation}
p_{1} = \frac{1}{3} \left(1- 2 \cos \psi\right), \quad p_{2,3} = \frac{1}{3} \left(1 + \cos \psi \pm \sqrt{3} \sin \psi \right).
\end{equation}
It is well known that the Taub points occur for $\psi=-\pi/3$, $\pi$, and  $\pi/3$. We use these Taub points to subdivide the circle $\mathcal{K}$ into three open arcs. Along the arc $\mathcal{K}_1$ defined by
\begin{equation}
-\frac{\pi}{3} < \psi < \frac{\pi}{3}
\end{equation}
the eigenvalue $\lambda_3$ is positive, and hence each point on the arc corresponds to a source.
Furthermore, on $\mathcal{K}_{1}$  we have $p_{1} < 0$, $p_{2} > 0$, and $p_{3} > 0$ which implies that each of these equilibrium points represent a cigar-type past singularity of the system.
Along the arcs $\mathcal{K}_2$ and $\mathcal{K}_3$ defined by
\begin{equation}
\quad -\pi < \psi < -\frac{\pi}{3}
\quad {\mbox{and}} \quad \frac{\pi}{3} < \psi < \pi,
\end{equation}
respectively, the eigenvalue $\lambda_3$ is negative and each Kasner point on these arcs corresponds to a local saddle point. On both these arcs we also have $p_{1} > 0$, $p_{2} > 0$, and $p_{3} < 0$
which corresponds to a cigar-type singularity as well. In the case of a cigar singularity, matter collapses in along one spatial direction from infinity, halts, and then begins to re-expand, while in the other spatial directions, the matter expands monotonically at all times. Each Taub point, on the other hand, corresponds to a pancake singularity, where matter is found to expand monotonically in all directions, starting from a very high expansion rate in one spatial direction, but from zero expansion rates in the other spatial directions (Page 144, \cite{ellis3}).

\subsection{Flat Universe Equilibrium Point}
This equilibrium point, which we denote as $\mathcal{F}$, occurs for  
\begin{eqnarray}
\label{eq:point2}
\Sigma_{+} = 0, \quad \Sigma_{-} = 0, \quad \mathcal{B}_{2} = 0,
\end{eqnarray}
and represents the \emph{flat FLRW} universe. The cosmological parameters at this point take the form
\begin{equation}
\label{eq:params2}
\Omega_{f} = 1, \quad q = \frac{1}{2}\left(1 + 3w - 9\xi_{0}\right), \quad \Sigma^2 = 0.
\end{equation}
The eigenvalues of the Jacobian matrix of the dynamical system at $\mathcal{F}$ are given by
\begin{equation}
\label{eq:flrweigs}
\lambda_{1} = \frac{1}{2} (-1+3 w-9 \xi_{0} ), \quad \lambda_{2} = \lambda_{3} = \frac{1}{2} (-3+3 w-4 \eta_{0} -9 \xi_{0} ),
\end{equation}
where in Eqs. \eqref{eq:params2} and \eqref{eq:flrweigs} we require that $\eta_{0} \geq 0$, $\xi_{0} \geq 0$, and $-1 \leq w \leq 1$.

The point $\mathcal{F}$ represents a local sink if
\begin{equation}
\label{eq:fsink1}
\eta_{0} \geq 0, \quad \xi_{0} \geq 0, \quad -1 \leq w < \frac{1}{3}, 
\end{equation}
or
\begin{equation}
\label{eq:fsink2}
\eta_{0} \geq 0, \quad \frac{1}{3} \leq w \leq 1, \quad \xi_{0} > \frac{1}{9} \left(-1 + 3w\right).
\end{equation}
In Fig. (\ref{fig:fig1}), we have denoted the region defined by \eqref{eq:fsink1} and \eqref{eq:fsink2} as $S1(F)$.

The point $\mathcal{F}$ represents a saddle point if
\begin{equation}
\label{eq:fsaddle1}
\eta_{0} = 0, \quad \frac{1}{3} < w < 1, \quad 0 \leq \xi_{0} < \frac{1}{9}(-1+3w),
\end{equation}
or
\begin{equation}
\label{eq:fsaddle2}
\eta_{0} = 0, \quad w = 1, \quad 0 < \xi_{0} < \frac{2}{9},
\end{equation}
or
\begin{equation}
\label{eq:fsaddle3}
\eta_{0} > 0, \quad \frac{1}{3} < w \leq 1, \quad 0 \leq \xi_{0} < \frac{1}{9}\left(-1+ 3w\right),
\end{equation}
where in each case $\lambda_{1} > 0$ and $\lambda_{2} = \lambda_{3} < 0$.
We will subsequently denote the region defined by \eqref{eq:fsaddle1} - \eqref{eq:fsaddle3} as SA(F).

The point $\mathcal{F}$ can also represent a local source if
\begin{equation}
\label{eq:fsource1}
\eta_{0} = 0, \quad w = 1, \quad \xi_{0} = 0,
\end{equation}
where in this case, $\lambda_{1} > 0$ and $\lambda_{2} = \lambda_{3} = 0$. An analysis nearly identical to that presented in the classification of the Kasner point $\mathcal{K}_1$ does confirm this is a source point. We will subsequently denote the region defined by Eq. \eqref{eq:fsource1} as U(F).

It is important to note that $q = -1$ when $0 \leq \xi_{0} \leq \frac{2}{3}$ and $w = 3\xi_{0} - 1$, and thus the equilibrium point in the domain defined by these values of $\eta_{0}$, $\xi_{0}$ , and $w$ does not correspond to a self-similar solution. In particular, if one chooses $\xi_{0} = 0$ such that $w = -1$, the corresponding model is locally the de Sitter solution \cite{elliscosmo}.

\subsection{A New Equilibrium Point}
We will denote this equilibrium point as $\mathcal{BI_{MV}}$. For brevity in our presentation, we introduce the condensed notation for the fixed points
\begin{equation}
\label{eq:p3points}
\Sigma_{+} = -\frac{1}{16\alpha} (\beta_1 + \gamma),\quad\Sigma_{-} = -\frac{\sqrt{3}}{16\alpha} (\beta_1 + \gamma), \quad \mathcal{B}_{2} = \frac{1}{4 \sqrt{3}} \left(\beta_2 - \gamma\right)^{1/2},
\end{equation}
where
\begin{equation}
\label{eq:alpha}
\alpha = 5-6 \eta_{0} +3w (1+2 \eta_{0} ),
\end{equation}
%\begin{equation}
%\label{eq:beta1}
%\beta_1 = 35-52 \eta_{0} +12 \eta_{0} ^2+9 (w+2 w \eta_{0} )^2+w \left(36+48 \eta_{0} -48 \eta_{0}^2\right),
%\end{equation}
\begin{equation}
\label{eq:beta1}
\beta_1 = 9 w^2 (1+2\eta_{0} )^2 + 12w(1+2\eta_{0})(3-2\eta_{0}) +
(7-2\eta_{0})(5-6\eta_{0}),
\end{equation}
\begin{equation}
\label{eq:beta2}
\beta_2 = -9 w^2 (1+2\eta_{0} )^2 + 12w(1-2\eta_0)^2 -
(17-6\eta_0)(3-2\eta_0) -144 \xi_{0},
\end{equation}
%\begin{equation}
%\label{eq:beta2}
%\beta_2 = -51+12 w (1-2 \eta_{0} )^2+52 \eta_{0} -12 \eta_{0} ^2-9 (w+2 w \eta_{0} )^2%-144 \xi_{0},
%\end{equation}
and
%\begin{equation}
%\label{eq:gamma}
%\gamma = \sqrt{(5-6 \eta_{0} +w (3+6 \eta_{0} ))^2 \left(81-6 w (3-2 \eta_{0} )^2-28 %\eta_{0} +4 \eta_{0} ^2+9 (w+2 w \eta_{0} )^2+288 \xi_{0} \right)}.
%\end{equation}
\begin{equation}
\label{eq:gamma}
\gamma = |\alpha|\left[9w^2(1+2\eta_0)^2 - 6w(3-2\eta_0)^2 + (7-2\eta_0)^2
+ 32(1+9\xi_0)\right]^{1/2}.
\end{equation}
Similarly, the cosmological parameters at this point take the form
\begin{equation}
\Omega_{f} = -\frac{1}{16\alpha}\left(\beta_3 + (1+2\eta_0)\gamma\right),
\quad q = \frac{1}{4\alpha} (\beta_4 + \gamma),\quad
\Sigma^2 = \frac{1}{64\alpha^2} \left(\beta_1 + \gamma\right)^2,
\end{equation}
where
\begin{equation}
\label{eq:beta3}
\beta_3 = 9 w^2 (1+2 \eta_{0} )^3 -12 w (1-2 \eta_{0} )^2 (1+2 \eta_{0} )
-(5-6\eta_0)(3-2\eta_0)^2
\end{equation}
%\begin{equation}
%\label{eq:beta3}
%\beta_3 = -45+114 \eta_{0} -92 \eta_{0} ^2+24 \eta_{0} ^3-12 w (1-2 \eta_{0} )^2 (1+2 \eta_{0} )+9 w^2 (1+2 \eta_{0} )^3
%\end{equation}
and
\begin{equation}
\label{eq:beta4}
\beta_4 = 9 w^2 (1+2 \eta_{0} )^2 +24 w (1+2 \eta_{0} ) (2- \eta_{0} )
+(5-6\eta_0)(11-2\eta_0).
\end{equation}
%\begin{equation}
%\label{eq:beta4}
%\beta_4 = 
%55-76 \eta_{0} +12 \eta_{0} ^2+9 (w+2 w \eta_{0} )^2+w \left(48+72 \eta_{0} -48 \eta_{0} ^2\right).
%\end{equation}
%%%%%%%%%%%%%%%%%%%%%%%%%%%%%%%%%%%%%%%%%%%%%%%%
%The cosmological parameters at this point take the form
%\begin{dmath*}
%\label{params3omega}
%\Omega_{f} = -\frac{1}{16 (5-6 \eta_{0} +w (3+6 \eta_{0} ))}\left(-45+114 \eta_{0} -92 \eta_{0} ^2+24 \eta_{0} ^3-12 w (1-2 \eta_{0} )^2 (1+2 \eta_{0} )+9 w^2 (1+2 \eta_{0} )^3%+\sqrt{(5-6 \eta_{0} +w (3+6 \eta_{0} ))^2 \left(81-6 w (3-2 \eta_{0} )^2-28 \eta_{0} +4 \eta_{0} ^2+9 (w+2 w \eta_{0} )^2+288 \xi_{0} \right)}+2 \eta_{0}  \sqrt{(5-6 \eta_{0} +w (3+6 \eta_{0} ))^2 \left(81-6 w (3-2 \eta_{0} )^2-28 \eta_{0} +4 \eta_{0} ^2+9 (w+2 w \eta_{0} )^2+288 \xi_{0} \right)}\right),
%\end{dmath*}
%\begin{dmath*}
%\label{params3q}
%q = \frac{1}{4 (5-6 \eta_{0} +w (3+6 \eta_{0} ))}\left(55-76 \eta_{0} +12 \eta_{0} ^2+9 (w+2 w \eta_{0} )^2+w \left(48+72 \eta_{0} -48 \eta_{0} ^2\right)+\sqrt{(5-6 \eta_{0} +w (3+6 \eta_{0} ))^2 \left(81-6 w (3-2 \eta_{0} )^2-28 \eta_{0} +4 \eta_{0} ^2+9 (w+2 w \eta_{0} )^2+288 \xi_{0} \right)}\right),
%\end{dmath*}
%and
%\begin{dmath}
%\label{params3sigsq}
%\Sigma^2 = \frac{1}{64 (5-6 \eta_{0} +w (3+6 \eta_{0} ))^2}\left(35-52 \eta_{0} +12 \eta_{0} ^2+9 (w+2 w \eta_{0} )^2+w \left(36+48 \eta_{0} -48 \eta_{0} ^2\right)+\sqrt{(5-6 \eta_{0} +w (3+6 \eta_{0} ))^2 \left(81-6 w (3-2 \eta_{0} )^2-28 \eta_{0} +4 \eta_{0} ^2+9 (w+2 w \eta_{0} )^2+288 \xi_{0} \right)}\right)^2,
%\end{dmath}
The restrictions require us to set
%\begin{equation}
%\label{eq:p3restr1}
%\eta_{0} > \frac{3}{2}, \quad \xi_{0} =0, \quad w = \frac{1}{3},
%\end{equation}
%or
\begin{equation}
\label{eq:p3restr2}
\eta_{0} > \frac{3}{2}, \quad  \frac{1}{3} \leq w < \frac{-5+6\eta_{0}}{3+ 6 \eta_{0}}, \quad 0 \leq \xi_{0} \leq \frac{1}{9}\left(-1 + 3w\right).
\end{equation}
We will subsequently denote the parameter region defined by \eqref{eq:p3restr2} as S2(BIMV). We were not able to obtain exact expressions for the eigenvalues in this region due to overwhelming algebraic complexity; however, comprehensive numerical experiments demonstrate that the eigenvalues in this region are either zero or negative thus corresponding to a sink. Interestingly, for a fixed value of $\eta_0 > 3/2$, the dependence of the largest eigenvalue $\lambda_1$ on the parameters $w$ and $\xi_0$ is very nearly linear on S2(BIMV). For several values of $\eta_0$ a planar approximation for the $\lambda_1$ surface was constructed in our numerical experiments using computed values in the $(w,\xi_0)$ domain. The planar approximation with equation $\lambda_1 = 1-3w+9\xi_0$ agreed with numerically-computed values of $\lambda_1$  everywhere in S2(BIMV) to within 3 to 5 digits accuracy, depending on the value of $\eta_0 $ chosen in the range $3/2 < \eta_0 \leq 500 $; the best agreement was obtained for larger values of $\eta_0$. Despite the algebraic complexity, we can show analytically that the equilibrium point corresponding to parameters in S2(BIMV) is indeed a sink by the following considerations. For convenience we have included the Jacobian matrix $D \mathbf{f(a)}$ (where ${\bf{a}}$ is the equilibrium point under consideration) in Appendix A. As we discuss in the following section, the surface
$\xi_0 =(3w-1)/9$, which forms one boundary of the domain S2(BIMV), corresponds to bifurcations in the solution; on this surface the Jacobian matrix is diagonal and its eigenvalues are seen to be 
\begin{equation}
\lambda_{1} = 0, \lambda_{2} = \lambda_3= -1 - 2\eta_{0}.
\end{equation}
We seek to characterize the equilibrium point slightly inside the region S2(BIMV), and thus in what follows we find expressions for the eigenvalues corresponding to $\xi_0 =(3w-1)/9 - \varepsilon$, where $\varepsilon > 0$ is a small parameter chosen to ensure that indeed $\xi_0\geq 0$ and $w<(6\eta_0-5)/(6\eta_0+3)$. Expanding the elements of the Jacobian matrix in a series in $\varepsilon$ to first order allows simple expressions for the eigenvalues to be obtained:
\begin{equation}
\label{eq:approx_eigs}
\lambda_{1} = -9\varepsilon, \lambda_{2} = -1 - 2\eta_{0} + \frac{108\varepsilon}{(7-2\eta_0)+3w(1+2\eta_0)},
\lambda_{3} = -1 - 2\eta_{0} + \frac{36\varepsilon}{(7-2\eta_0)+3w(1+2\eta_0)}.
\end{equation}
We note that all the terms in \eqref{eq:approx_eigs} have error of order ${\cal{O}}(\varepsilon^2)$. The quantity $\varepsilon$ may be taken arbitrarily small, and thus all the eigenvalues corresponding to parameters slightly inside the bifurcation boundary are negative; i.e., the equilibrium point is a local sink. Since the solution does not bifurcate inside the region S2(BIMV) -- it does so only across its boundaries -- it follows that {\em all} parameter values inside the region correspond to a local sink. In addition, the results 
\eqref{eq:approx_eigs} indicate that $\partial \lambda_1 /\partial \xi_0 \approx 9$ at the boundary $\xi_0 =(3w-1)/9$; this approximation for the $\xi_0$-slope of the $\lambda_1$ surface agreed to several digits with the same quantity which was numerically computed and used to form the planar approximation for this surface discussed above.

To the best knowledge of the authors the equilibrium point $\mathcal{BI_{MV}}$ has not previously been reported in the literature, and represents a new solution to the Einstein Field Equations. Interestingly, the model with parameter values in S2(BIMV) will not isotropize, since this equilibrium point is a local source with $\Sigma_{+}, \Sigma_{-}, \mathcal{B}_{2} \neq 0$.

For convenience, we have summarized the results of this section in Fig. (\ref{fig:fig1}) which depicts the different regions of sinks, saddles, and sources of the dynamical system.
\begin{figure}[H]
\begin{center}
\caption{A depiction of the different regions of sinks, sources, and saddles of the dynamical system as defined by the aforementioned restrictions on the values of the expansion-normalized bulk and shear viscosities, $\xi_{0}, \eta_{0}$ and equation of state parameter, $w$.}
\includegraphics[scale = 0.75]{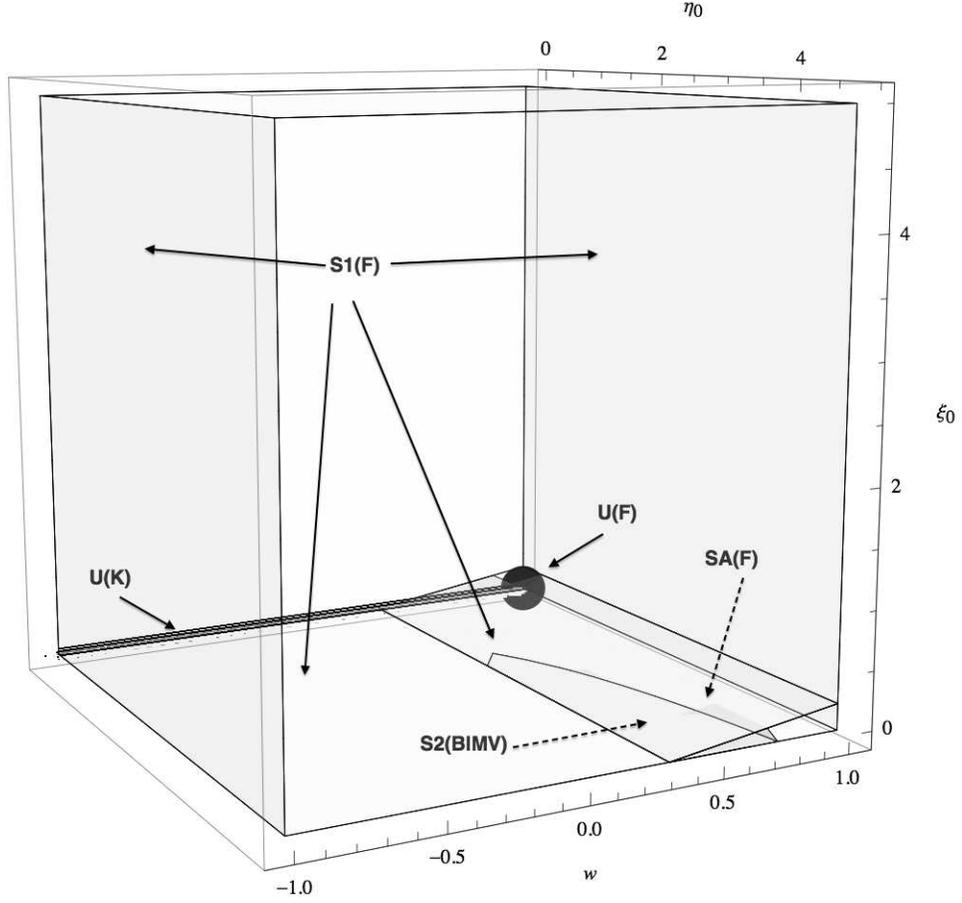}
\label{fig:fig1}
\end{center}
\end{figure}

\newpage
%%%%%%%%%%%%%%%%%%%%%%%%%%%%%%%%%%%%%%%%%%%%%%%%%%%%%%%%%%%%%%%%%%%%%%%%%%%%%%%%%%%%%%%
%%%%%%%%%%%%%%%%%%%%%%%%%%%%%%%%%%%%%%%%%%%%%%%%%%%%%%%%%%%%%%%%%%%%%%%%%%%%%%%%%%%%%%%

\section{Bifurcations}
We note that the equilibria found above are related to each other by sequences of bifurcations. We will now give in this section the details of these bifurcations. The method we use involves determining for what values of $\eta_{0}$, $\xi_{0}$, and $w$ do the different equilibrium points destabilize. A similar method was employed in \cite{hewittbridsonwainwright}.

The linearized system for points on $\mathcal{K}$, where $\Sigma_{-} = \pm\sqrt{1 - \Sigma_{+}^2}$ becomes:
\begin{eqnarray}
\Sigma_{+}' &=& -3(-1+w)\Sigma_{+}^3 -3(-1+w)\Sigma_{+}^2\sqrt{1-\Sigma_{+}^2}, \\
\Sigma_{-}' &=&  -3(-1+w)\Sigma_{+}^2 \sqrt{1-\Sigma_{+}^2} + 3\Sigma_{-}(-1+w)(-1+\Sigma_{+}^2),\\
\mathcal{B}_{2}' &=& (1+\Sigma_{+} + \sqrt{3(1-\Sigma_{+}^2)}.
\end{eqnarray}
We can therefore see that $\Sigma_{+}$ destabilizes $\mathcal{K}$ when $\Sigma_{+} = 0, -1 \leq w < 1$, $\Sigma_{-}$ destabilizes $\mathcal{K}$ when $\Sigma_{+} = \pm 1, -1 \leq w < 1$, and $\mathcal{B}_{2}$ destabilizes $\mathcal{K}$ when $\Sigma_{+} = -1, -1 \leq w < 1$, where in each case $\xi_{0} = \eta_{0} = 0$.

We next consider the linearized system at $\mathcal{F}$, given by
\begin{eqnarray}
\Sigma_{+}' &=& \frac{1}{2} \left(-3 + 3w - 4 \eta_{0} - 9\xi_{0}\right)\Sigma_{+}, \\
\Sigma_{-}' &=& \frac{1}{2} \left(-3 + 3w - 4 \eta_{0} - 9\xi_{0}\right)\Sigma_{-}, \\
\mathcal{B}_{2}' &=& \frac{1}{2}\left(-1 + 3w - 9 \xi_{0}\right)\mathcal{B}_{2}.
\end{eqnarray}
It may be seen that both $\Sigma_{+}$ and $\Sigma_{-}$ destabilize $\mathcal{F}$ when $\xi_{0} = \eta_{0} = 0$, and $w = 1$, while $\mathcal{B}_{2}$ destabilizes $\mathcal{F}$ when $\eta_{0} \geq 0$, $1/3 \leq w \leq 1$, and $\xi_{0} = \left(3w-1\right)/9$.
%The physical equilibria found in the previous section are related to one other by a sequence of bifurcations, which can be understood as follows. The linearized equation for $\mathcal{B}_{2}$ at $\mathcal{F}$, resulting from Eq. \eqref{eq3:b1evolutionsys} is
%\begin{equation}
%\mathcal{B}_{2}' = \frac{1}{2}\mathcal{B}_{2}\left(-1+3w-9\xi_{0}\right).
%\end{equation}
%We see that $\mathcal{F}$ destabilizes on the surface
%\begin{equation}
%\label{eq:bifurc1}
%\xi_{0} = \frac{1}{9}\left(-1 + 3w\right).
%\end{equation}

We now turn to the final equilibrium point of the system, $\mathcal{BI_{MV}}$, whose corresponding Jacobian matrix is given in Appendix A. It may be seen that the Jacobian is in fact diagonal when $\eta_{0} > 3/2$ and $\xi_0 = (3w-1)/9$ in which case $\mathcal{B}_{2}'=0$. Thus $\mathcal{B}_{2}$ destabilizes this equilibrium point along the surface $\xi_0 = (3w-1)/9$, which is a shared boundary with the region SA(F). Across this boundary, the source point in SA(F) becomes a sink in S2(BIMV). Extensive numerical experiments indicated that there were no other destabilizations for this equilibrium point.

We thus see that the system destabilizes either on the line in parameter space $\xi_0=\eta_0=0$ or it destabilizes on the parameter surface $\xi_0 = (3w-1)/9$. Given this information on the destabilizations, we see that some possible bifurcation sequences can be obtained as follows. First, let us set $w = 1/3, \xi_{0} = 0$. Then, we have that:
\begin{equation}
\mathcal{K}_{(\eta_{0} = 0)} \rightarrow \mathcal{F}_{(0 < \eta_{0} \leq 3/2)} \rightarrow \mathcal{BI_{MV}}_{(\eta_{0} > 3/2)}. 
\end{equation}
Another possible bifurcation sequence is obtained when:
\begin{equation}
\mathcal{K}_{(-1 \leq w < 1/3, \xi_{0} = \eta_{0}=0)} \rightarrow \mathcal{BI_{MV}}_{(\eta_{0} > 3/2, w = 1/3, \xi_{0} = 0)} \rightarrow \mathcal{F}_{(\eta_{0} > 3/2, 1/3 < w \leq 1, 0 < \xi_{0} \leq 2/9)}.
\end{equation}
One can also have that
\begin{equation}
\mathcal{K}_{(-1 \leq w < 1/3, \xi_{0} = \eta_{0} =0)} \rightarrow \mathcal{BI_{MV}}_{(\eta_{0} > 3/2, w = 1/3, \xi_{0} = 0)} \rightarrow \mathcal{F}_{(\eta_{0} = \xi_{0} = 0, w = 1)}.
\end{equation}

As discussed previously, the surface $\xi_{0} = \left(3w-1\right)/9$ governs bifurcations of the dynamical system. It is constructive to display this bifurcation behavior for cases where first $\xi_{0} < \left(3w-1\right)/9$, then $\xi_{0} = \left(3w-1\right)/9$, and finally, $\xi_{0} > \left(3w-1\right)/9$. For the purposes of this numerical experiment, we specifically chose $w = 1/2$, $\eta_{0} = 5$, therefore requiring that the three aforementioned cases reduce to $\xi_{0} < 1/18$, $\xi_{0} = 1/18$, and $\xi_{0} > 1/18$. The outcomes of this experiment are shown in Fig. (2).
\newpage
\begin{figure}[H]
\label{fig:fig10}
\caption{These figures show bifurcation behavior for a varying expansion-normalized bulk viscosity coefficient, $\xi_{0}$, while $w$ and $\eta_{0}$ were held fixed. The circles indicate the $\mathcal{BI_{MV}}$ equilibrium points, while the diamond indicates the FLRW equilibrium point. For the first figure, $\xi_{0} =  0.05$, for the second figure, $\xi_{0} = 1/18$, and for the last figure, $\xi_{0} = 0.6$. Note how the increasing values of $\xi_{0}$ first result in a slight shift of the equilibrium point position of $\mathcal{BI_{MV}}$, and then finally a transition to a new state, namely the FLRW equilibrium, which was predicted by our fixed-point analysis.}
\includegraphics*[scale = 0.50]{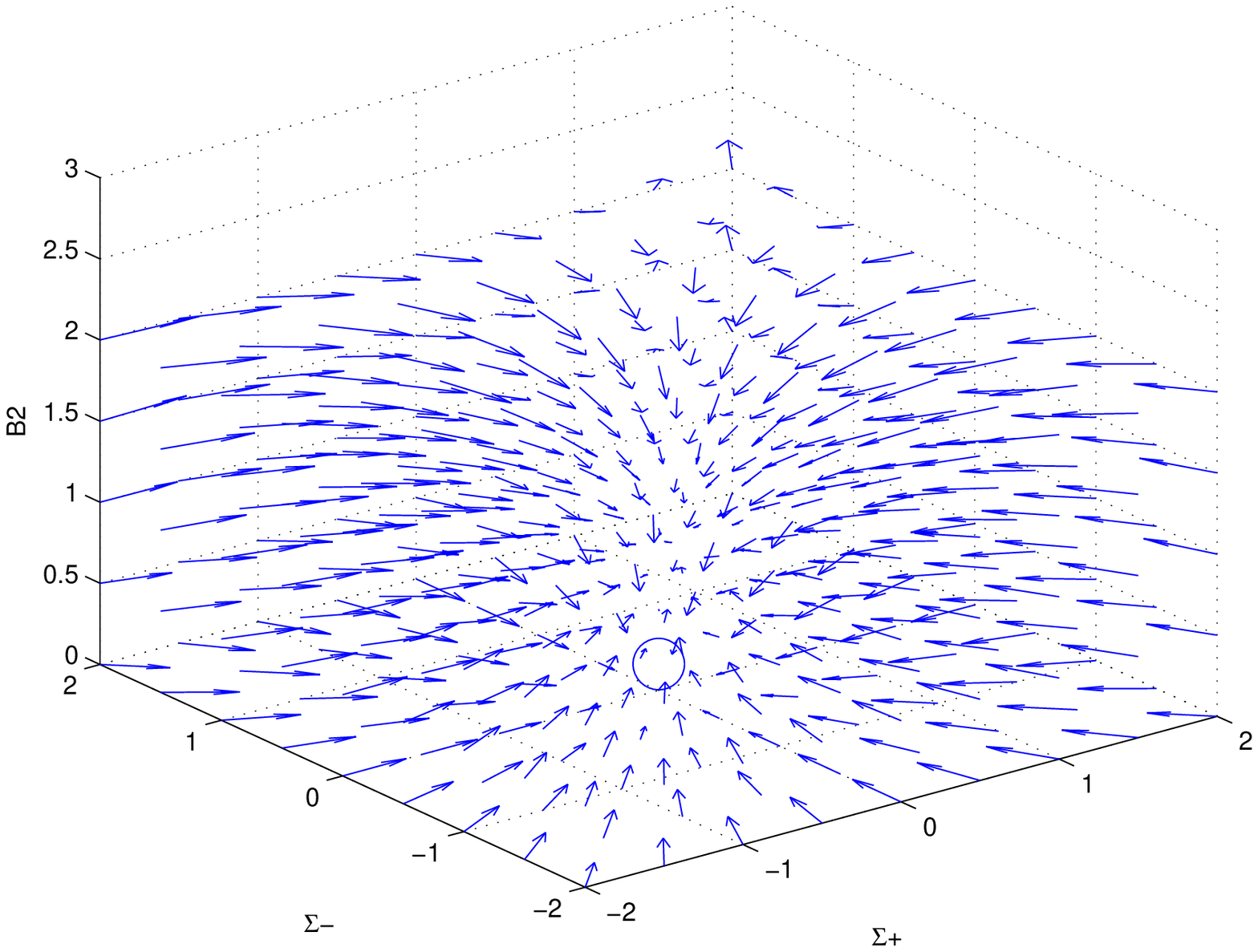} \\
\includegraphics*[scale = 0.50]{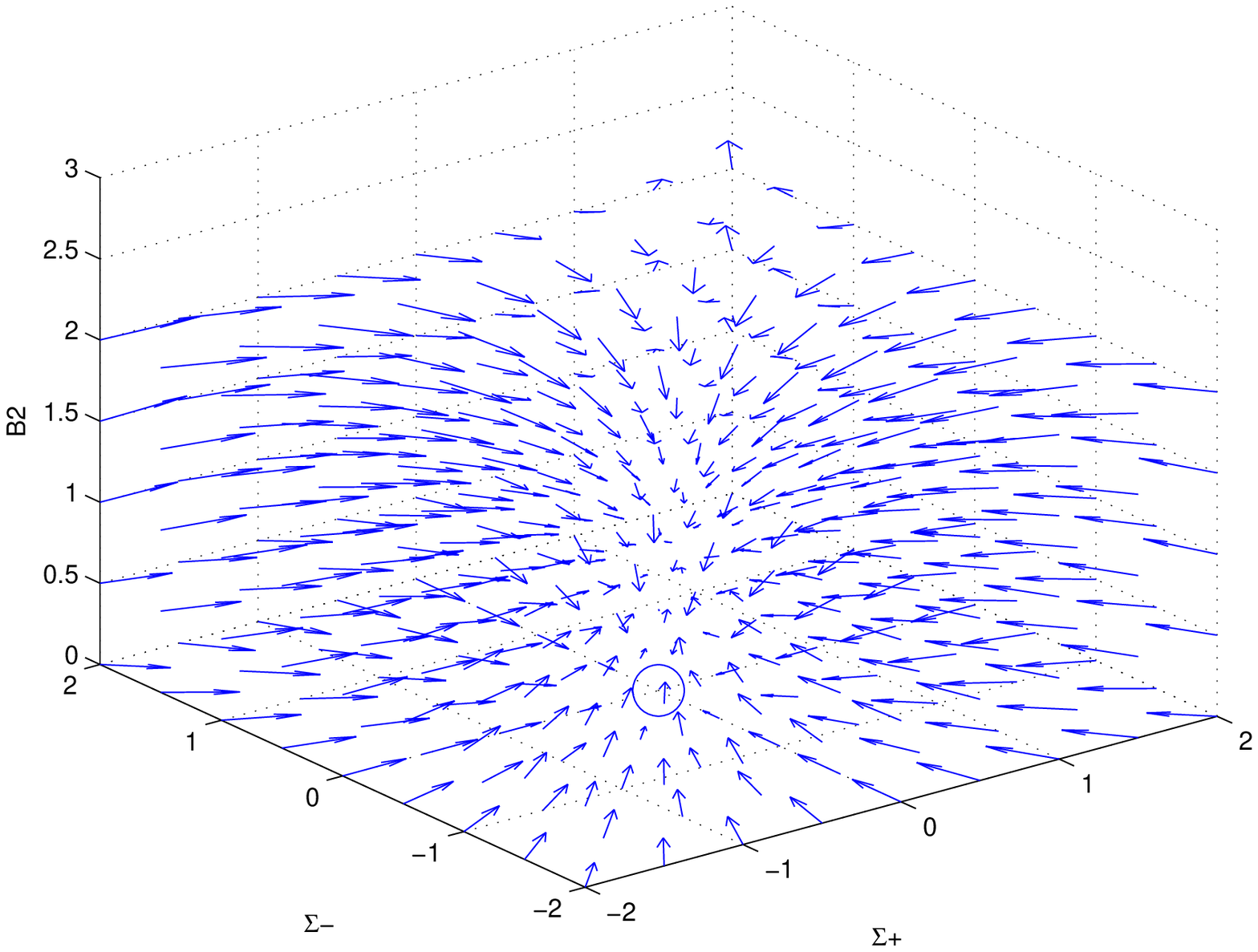} 
\includegraphics*[scale = 0.50]{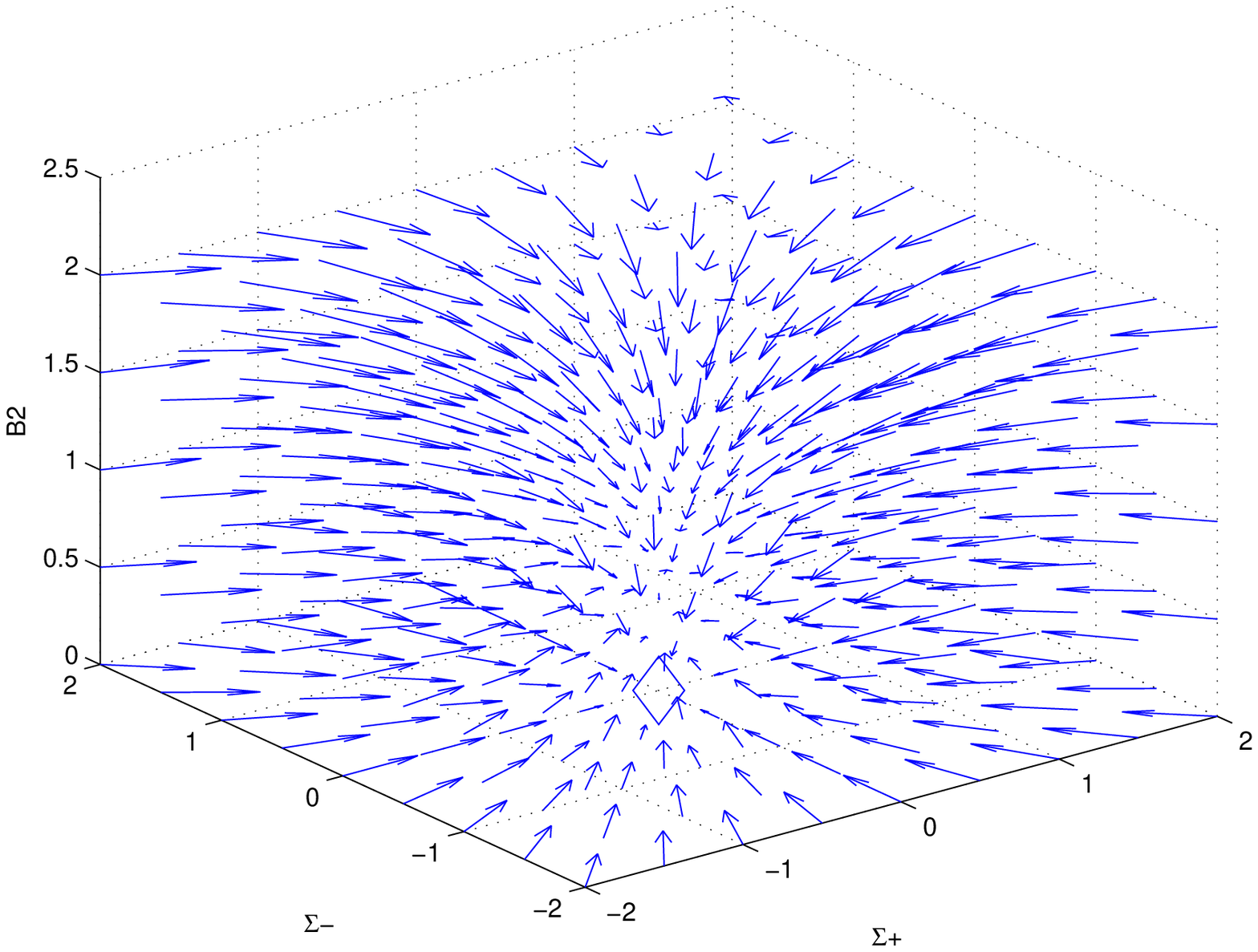} 
\end{figure}
\newpage

%%%%%%%%%%%%%%%%%%%%%%%%%%%%%%%%%%%%%%%%%%%%%%%%%%%%%%%%%%%%%%%%%%%%%%%%%%%%%%%%%%%%%%%
%%%%%%%%%%%%%%%%%%%%%%%%%%%%%%%%%%%%%%%%%%%%%%%%%%%%%%%%%%%%%%%%%%%%%%%%%%%%%%%%%%%%%%%

\section{Qualitative Properties of the System}

\subsection{A Further Analysis of the Asymptotic Behavior}
An important question to ask in analyzing some qualitative properties of the dynamical system is whether there are any invariant sets of the dynamical system. A very useful proposition in this regard is given by Proposition 4.1 in reference \cite{ellis}, which states that for a dynamical system of type \eqref{eq:basedef}, if $Z: \mathbb{R}^{n} \to \mathbb{R}$ is a $C^{1}$ function such that $Z' = \alpha Z$, where $\alpha: \mathbb{R}^{n} \to \mathbb{R}$ is a continuous function, then the subsets of $\mathbb{R}^{n}$ defined by $Z > 0$, $Z = 0$, and $Z < 0$ are invariant sets of the flow of the system of differential equations. From Eq. \eqref{eq3:b1evolutionsys}, we see that this proposition applies with $Z = \mathcal{B}_{2}$, and thus $\mathcal{B}_{2} = 0$ and $\mathcal{B}_{2} > 0$ are invariant sets of the system. We also note that if one sets $\eta_{0} = \xi_{0} = 0$ in Eq. \eqref{eq:OmegafP}, then the proposition also applies with $Z = \Omega_{f}$, and hence $\Omega_{f} \geq 0$ is an invariant set of the system.

With respect to the existence of limit sets, we first make a proposition about the late-time dynamics of the system:
\begin{prop}
Consider the dynamical system \eqref{eq:evolutionsys1} with parameters in the region $S1(F)$ defined by $-1 \leq w < \frac{1}{3}$, $\xi_{0} = 0$ and  $\eta_{0} = 0$.  Then, as $\tau \to +\infty$, $\Sigma^2 = \Sigma_{+}^2 + \Sigma_{-}^2 \to 0$ and $\mathcal{B}_{2}^2 \to 0$, and hence the model isotropizes.
\end{prop}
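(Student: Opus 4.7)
The natural plan is to construct a Lyapunov-like function on the bounded state space and then invoke LaSalle's invariance principle. The obvious candidate is $W = 1 - \Omega_{f} = \Sigma_{+}^{2} + \Sigma_{-}^{2} + \tfrac{3}{2}\mathcal{B}_{2}^{2}$, since this is precisely the quantity one wants to drive to zero (and it is bounded below by $0$ and above by $1$ on the state space by \eqref{eq:ineq1}). Observing that in the specified region $\eta_{0}=\xi_{0}=0$, equation \eqref{eq:OmegafP} collapses to $\Omega_{f}' = \Omega_{f}(2q - 1 - 3w)$, so $W' = -\Omega_{f}(2q - 1 - 3w)$.

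The central algebraic step I would carry out is to substitute the constraint $\Omega_{f} = 1 - \Sigma_{+}^{2} - \Sigma_{-}^{2} - \tfrac{3}{2}\mathcal{B}_{2}^{2}$ from \eqref{eq4:b1evolutionsys} into the expression \eqref{eq:qdef2} for $q$ to re-express $2q - 1 - 3w$ as a manifestly non-negative combination,
\begin{equation*}
2q - 1 - 3w = 3(1-w)\,\Sigma^{2} + \tfrac{3}{2}(1-3w)\,\mathcal{B}_{2}^{2}.
\end{equation*}
Since $-1 \leq w < 1/3$ gives both $1-w>0$ and $1-3w>0$, and since $\Omega_{f}\geq 0$ on the state space, it follows that $W' \leq 0$, with equality precisely on the set $\{\Omega_{f}=0\} \cup \{\Sigma_{+}=\Sigma_{-}=\mathcal{B}_{2}=0\}$. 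Thus $W$ is a monotonically non-increasing, bounded function along trajectories, and in particular converges.

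I would then apply LaSalle's invariance principle. The largest invariant subset of $\{W'=0\}$ consists of two pieces: the single fixed point $\mathcal{F}$, and the boundary sphere $\{\Omega_{f}=0\}$, which is invariant since $\Omega_{f}'=\Omega_{f}(2q-1-3w)$ vanishes identically there. For any initial condition with $\Omega_{f}(0) > 0$, the calculation in the previous paragraph shows $\Omega_{f}$ is in fact non-decreasing, so $\Omega_{f}(\tau) \geq \Omega_{f}(0) > 0$ for all $\tau \geq 0$. Hence the $\omega$-limit set of the trajectory is bounded away from $\{\Omega_{f}=0\}$, forcing it to coincide with $\{\mathcal{F}\}$. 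This immediately yields $\Sigma_{+}^{2}+\Sigma_{-}^{2}\to 0$ and $\mathcal{B}_{2}^{2}\to 0$, and the eigenvalue analysis \eqref{eq:flrweigs} (where $\mathcal{F}$ is a sink in S1(F)) is consistent with this global conclusion.

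The main obstacle I anticipate is not the Lyapunov computation itself — the factorization above is the key observation that makes everything go through — but rather handling the invariant boundary $\{\Omega_{f}=0\}$, which contains the Kasner circle as well as additional non-vacuum magnetic trajectories. The monotonicity of $\Omega_{f}$ is the tool that excludes this boundary from the $\omega$-limit set under the physically natural assumption $\Omega_{f}(0) > 0$; one should note that this assumption is implicit in the statement "the model isotropizes," since trajectories originating on $\{\Omega_{f}=0\}$ describe a universe devoid of the fluid component and need not approach $\mathcal{F}$.
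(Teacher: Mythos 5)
Your proposal is correct and rests on exactly the same key computation as the paper's proof: the identity $2q-1-3w = 3(1-w)\Sigma^{2} + \tfrac{3}{2}(1-3w)\mathcal{B}_{2}^{2}$, which is precisely the bracket appearing in Eq.~\eqref{eq:omegafp2} when $\xi_{0}=\eta_{0}=0$, and which yields monotonicity of $\Omega_{f}$ on S1(F). Where you differ is in how the conclusion is extracted. The paper passes directly from ``$\Omega_{f}$ is monotonically increasing and bounded above by $1$'' to ``$\lim_{\tau\to+\infty}\Omega_{f}=1$''; as written this only gives convergence to some $L\leq 1$, and an extra argument is needed to rule out $L<1$. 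Your LaSalle step supplies exactly that: the $\omega$-limit set must lie in the largest invariant subset of $\{W'=0\}=\{\Omega_{f}=0\}\cup\{\mathcal{F}\}$, and monotonicity bounds the trajectory away from the invariant boundary $\{\Omega_{f}=0\}$ whenever $\Omega_{f}(0)>0$, forcing the limit set to be $\{\mathcal{F}\}$. You are also right to make the hypothesis $\Omega_{f}(0)>0$ explicit; it is only implicit in the paper (a trajectory starting on the Kasner circle never isotropizes). So your write-up follows the same route but closes it more carefully than the published argument.
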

\begin{proof}
The details of the proof essentially follow the arguments given in the appendix of reference \cite{coleywainwright}. Substitution of Eq. \eqref{eq4:b1evolutionsys} in \eqref{eq:qdef2} results in the expression
\begin{equation}
\label{eq:qdef3}
q = \Sigma^2 \left( \frac{3-3w}{2} \right) + \mathcal{B}_{2}^2 \left(\frac{3 - 9w}{4}\right) + \frac{3w + 1}{2} - \frac{9}{2} \xi_{0}, 
\end{equation}
and hence the $\Omega_{f}'$ evolution equation \eqref{eq:OmegafP} may be written as
\begin{equation}
\label{eq:omegafp2}
\Omega_{f}' = \Omega_{f} \left[\Sigma^2 \left(3 - 3w\right) + \mathcal{B}_{2}^2 \left(\frac{3 - 9w}{2}\right) - 9 \xi_{0}    \right] + 4\eta_{0}\Sigma^2 + 9 \xi_{0}.
\end{equation}
In addition, from the generalized Friedmann equation, Eq. \eqref{eq4:b1evolutionsys}, we have that $\Omega_{f} \leq 1$.
%\begin{equation}
%\Omega_{f} \leq 1.
%\end{equation}
%\begin{eqnarray}
%\Omega_{f} &=& 1 - \frac{3}{2} \mathcal{B}_{2}^2 - \Sigma^2 \nonumber \\
%\Rightarrow \Omega_{f} - 1 &=& - \frac{3}{2} \mathcal{B}_{2}^2 - \Sigma^2 \nonumber \\
%\Rightarrow \Omega_{f} - 1 &\leq& 0 \nonumber \\
%\Rightarrow \Omega_{f} &\leq& 1.
%\end{eqnarray}
Therefore, to prove the proposition it remains show that the function $\Omega_{f}$ is monotonically increasing, i.e., $\Omega_{f}' > 0$. 
Then,
\begin{equation}
\left[\Sigma^2 \left(3 - 3w\right) + \mathcal{B}_{2}^2 \left(\frac{3 - 9w}{2}\right)  \right] + 4\eta_{0}\Sigma^2 > 0 \Leftrightarrow -1 \leq w < \frac{1}{3}.
\end{equation}
Therefore, in the region where $\eta_{0} \geq 0$, $\xi_{0} = 0$, and $-1 \leq w < \frac{1}{3}$, $\Omega_{f}$ is monotonically increasing. In can therefore be said that in this region,
\begin{equation}
\lim_{\tau \to +\infty} \Omega_{f} = 1.
\end{equation}
Using this result in the Friedmann equation \eqref{eq4:b1evolutionsys}, we have that
\begin{equation}
\lim_{\tau \to +\infty} \Omega_{f} = 1 \Rightarrow \lim_{\tau \to +\infty} \left(-\frac{3}{2}\mathcal{B}_{2}^2 - \Sigma^{2}\right) = 0.
\end{equation}
The latter then implies that
\begin{equation}
\lim_{\tau \to +\infty} \Sigma^{2} = \lim_{\tau \to +\infty} \mathcal{B}_{2}^2 = 0.
\end{equation}
\end{proof}

In order to gain some insight into the asymptotic behavior of the system as $\tau \rightarrow -\infty$ we use the extended LaSalle principle for negatively invariant sets; see Proposition B.3. in reference \cite{hewwain}. In particular, suppose $\mathbf{x}' = \mathbf{f(x)}$ is an autonomous system of first-order differential equations and let $Z: \mathbb{R}^{n} \to \mathbb{R}$ be a $C^{1}$ function. If $S \subset \mathbb{R}^{n}$ is a closed, bounded, and negatively invariant set, and $Z'(\mathbf{x}) \equiv \nabla Z \cdot \mathbf{f(x)} \leq 0$, $\forall$ $\mathbf{x} \in S$, then the extended LaSalle principle states that $\forall$ $\mathbf{a} \in S$, $\alpha(\mathbf{a}) \subseteq \{ \mathbf{x} \in S | Z'(\mathbf{x}) = 0\}$. That is, the $\alpha$-limit set $\alpha(\mathbf{a})$ contains the local sources in the system at $\tau \to -\infty$. We use this principle to establish past asymptotic behavior in the following proposition.

\begin{prop}
For the dynamical system (\ref{eq:evolutionsys1}), $\alpha(\mathbf{a}) \subseteq \{\Omega_{f} = 0\} = \{\mathcal{K}\}$.
\end{prop}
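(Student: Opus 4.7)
The approach is to apply the extended LaSalle principle stated immediately above the proposition, using the non-negative function $Z = 1 - \Omega_f$. First I would take $S$ to be the closed, bounded state space defined by the Hubble constraint \eqref{eq:ineq1}; in the parameter regime $\xi_{0} = \eta_{0} = 0$ in which the Kasner points actually occur as equilibria, Proposition~4.1 of \cite{ellis} applied to the simplified \eqref{eq:OmegafP} shows that $\Omega_f = 0$ and $\Omega_f > 0$ are both invariant sets of the flow, so $S$ is in particular negatively invariant as the hypothesis demands.

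Next I would verify that $Z' = -\Omega_f' \leq 0$ on $S$. Substituting the generalized Friedmann constraint \eqref{eq4:b1evolutionsys} and the deceleration parameter \eqref{eq:qdef2} into \eqref{eq:OmegafP} yields
\begin{equation*}
\Omega_f' = \Omega_f\bigl[3(1-w)\Sigma^2 + \tfrac{3}{2}(1-3w)\mathcal{B}_2^2 - 9\xi_0\bigr] + 4\eta_0\Sigma^2 + 9\xi_0,
\end{equation*}
exactly as used in the proof of the preceding proposition. For $\xi_0 = \eta_0 = 0$ and $-1 \leq w < 1/3$ every term on the right-hand side is manifestly non-negative, so $\Omega_f' \geq 0$ and hence $Z' \leq 0$ throughout $S$. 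The extended LaSalle principle then forces $\alpha(\mathbf{a}) \subseteq \{\mathbf{x}\in S : \Omega_f'(\mathbf{x}) = 0\}$, and inspection of the formula above identifies this zero set as the ellipsoid $\Omega_f = 0$ together with the isolated FLRW point $\mathcal{F}$ at which $\Sigma = \mathcal{B}_2 = 0$. Since \eqref{eq:flrweigs} shows $\mathcal{F}$ is a local sink in this parameter region---hence a source of the reversed-time flow---it cannot lie in the $\alpha$-limit of any non-trivial orbit, which leaves $\alpha(\mathbf{a}) \subseteq \{\Omega_f = 0\}$.

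The final identification with $\mathcal{K}$ is established by analyzing the induced flow on the invariant surface $\Omega_f = 0$: a direct check of \eqref{eq1:b1evolutionsys}--\eqref{eq3:b1evolutionsys} restricted to this surface shows that the only equilibria it supports lie on $\mathcal{B}_2 = 0$, i.e.\ on the Kasner circle. I expect this last step to be the main obstacle, because LaSalle places $\alpha(\mathbf{a})$ only on the full two-dimensional ellipsoid $\Omega_f = 0$, and ruling out non-equilibrium accumulation on its $\mathcal{B}_2 \neq 0$ portion requires an additional argument. A natural remedy is to restrict the vector field to the invariant boundary $\Omega_f = 0$ and exhibit a secondary monotone function on it---tracking $\mathcal{B}_2^2$ under the reduced vacuum-magnetic flow looks most promising---so that any $\alpha$-limit point with $\mathcal{B}_2 \neq 0$ is pushed back onto $\mathcal{K}$, giving the stated inclusion.
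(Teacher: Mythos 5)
Your argument is correct in outline and rests on the same key tool as the paper's proof --- the extended LaSalle principle applied to a monotone quantity built from $\Omega_f$ --- but you deploy it differently. You take $S$ to be the entire state space and $Z = 1-\Omega_f$, which forces the restriction $-1 \leq w < 1/3$ so that $\Omega_f' \geq 0$ everywhere, and obliges you to dispose of the extra point $\mathcal{F}$ in the zero set of $Z'$ (which you do correctly: a sink cannot lie in the $\alpha$-limit set of a non-constant orbit). The paper instead takes $S = \{\Omega_f = 0\}$ itself as the closed, bounded, negatively invariant set; there $\Omega_f' = 4\eta_0\Sigma^2 + 9\xi_0$ independently of $w$, so the sign condition holds (and forces $\eta_0=\xi_0=0$) with no restriction on $w$ and no interior equilibrium to exclude. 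The trade-off is that your version directly constrains the past limit of orbits starting anywhere in the state space, whereas the paper's literal choice of $S$ only speaks to orbits lying on the vacuum boundary. The obstacle you flag at the end --- that LaSalle confines $\alpha(\mathbf{a})$ only to the ellipsoid $\Sigma_+^2+\Sigma_-^2+\tfrac{3}{2}\mathcal{B}_2^2=1$, whose $\mathcal{B}_2\neq 0$ portion must still be excluded before one lands on the Kasner circle --- is genuine, and the paper does not close it either: it simply identifies $\{\Omega_f=0\}$ with $\{\mathcal{K}\}$ in the statement and asserts that $\Omega_f=\xi_0=\eta_0=0$ is ``precisely the region defining the Kasner circle.'' Your proposed remedy (a secondary monotone function for the reduced vacuum flow on the boundary) is the right kind of fix, but note that on that boundary one finds $\mathcal{B}_2' = \mathcal{B}_2\left(\Sigma^2+\Sigma_+ + \sqrt{3}\,\Sigma_-\right)$, whose bracketed factor changes sign around the ellipsoid, so $\mathcal{B}_2^2$ itself is not monotone there and a more careful choice of function (or a direct analysis of the vacuum-magnetic orbits) would be required to complete this step.
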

\begin{proof}
The set $\{\Omega_{f} = 0\}$ is negatively invariant, closed, and bounded. From Eq. \eqref{eq:omegafp2} when $\Omega_{f} = 0$ it follows that $\Omega_{f}' \leq 0$ if and only if $\eta_{0} = \xi_{0} = 0$. Therefore, $\Omega_{f}' = 0$ if $\Omega_{f} = \xi_{0} = \eta_{0} = 0$, which is precisely the region defining the Kasner circle, so  $\alpha(\mathbf{a}) \subseteq \{\Omega_{f} = 0\}$.
\end{proof}

%%%%%%%%%%%%%%%%%%%%%%%%%%%%%%%%%%%%%%%%%%%%%%%%%%%%%%%%%%%%%%%%%%%%%%%%%%%%%%%%%%%%%%%
%%%%%%%%%%%%%%%%%%%%%%%%%%%%%%%%%%%%%%%%%%%%%%%%%%%%%%%%%%%%%%%%%%%%%%%%%%%%%%%%%%%%%%%

\subsection{Heteroclinic Orbits}
It is interesting to note that for the cosmological model under consideration in this paper, no finite heteroclinic sequences exist. The reason is that every heteroclinic sequence has an initial point that represents a local source, intermediate points which represent saddles, and a terminal point which represents a local sink. The caveat however, is that each equilibrium point and its corresponding asymptotic behavior must belong to the same region of the parameter space $(\eta_{0}, \xi_{0}, w)$, which is not possible for our dynamical system. There are however, several heteroclinic orbits which connect distinct equilibrium points, of which some have been plotted in Figs. (3), (4), and (5).
For the region defined by $\{(\eta_{0}, \xi_{0}, w) | (\eta_{0}, \xi_{0}, w) \in U(K) \cup S1(F) \}$, we have:
\begin{eqnarray}
	\mathcal{K} \rightarrow \mathcal{F}.
\end{eqnarray}
%%orbits 1
\begin{figure}[H]
\begin{center}
\label{fig:orbits1}
\caption{The heteroclinic orbits joining the $\mathcal{K} \rightarrow \mathcal{F}$. The plus signs indicate the Kasner equilibrium points that we found above, while the large circle indicates the FLRW equilibrium point. The numerical integration was completed with $\eta_{0} = \xi_{0} = 0$, $w = 0.325$.}
\includegraphics*[scale = 0.60]{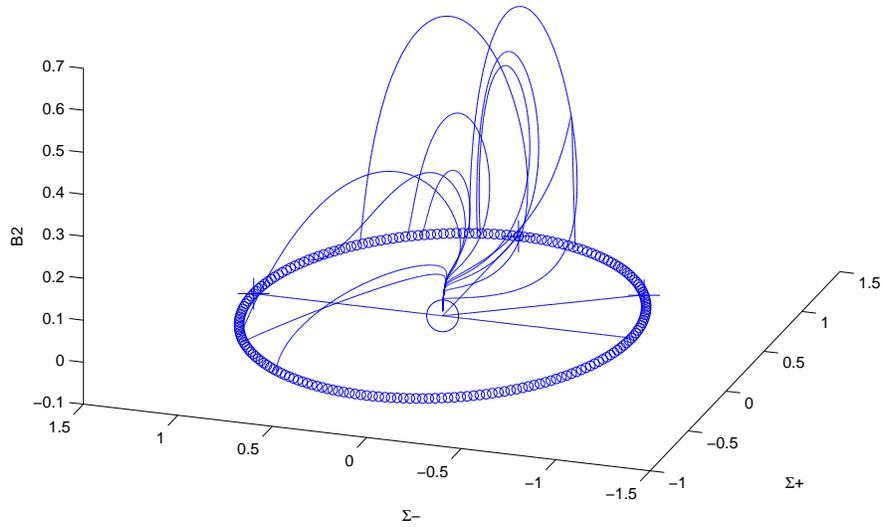}
\end{center}
\end{figure}
\newpage

For the region defined by $\{(\eta_{0}, \xi_{0}, w) | (\eta_{0}, \xi_{0}, w) \in U(K) \cup SA(F) \}$, we have:
\begin{eqnarray}
	\mathcal{K} \rightarrow \mathcal{F}.
\end{eqnarray}
\begin{figure}[H]
\begin{center}
\label{fig:orbits2}
\caption{The heteroclinic orbits joining the $\mathcal{K} \rightarrow \mathcal{F}$. The plus signs indicate the Kasner equilibrium points that we found above, while the large circle indicates the FLRW equilibrium point. The numerical integration was completed with $\eta_{0} = \xi_{0} = 0$, $w = \frac{1}{2}$.}
\includegraphics*[scale = 0.60]{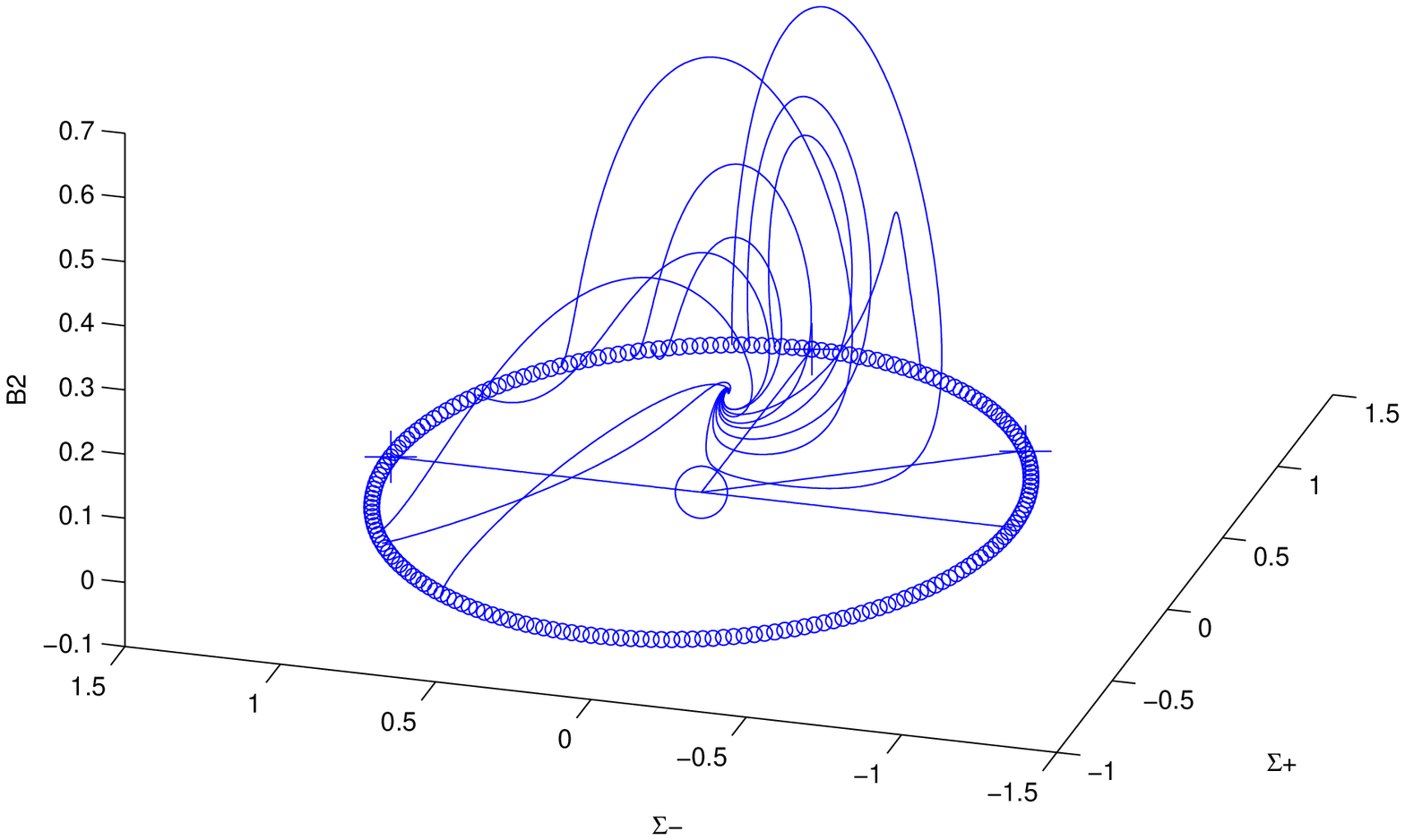}
\end{center}
\end{figure}
\newpage

For the region defined by $\{(\eta_{0}, \xi_{0}, w) | (\eta_{0}, \xi_{0}, w) \in SA(F) \cup S2(BIMV) \}$, we have
\begin{equation}
	\mathcal{F} \rightarrow \mathcal{BI_{MV}}.
\end{equation}
\begin{figure}[H]
\begin{center}
\label{fig:orbits3}
\caption{The heteroclinic orbits joining SA(F) to S2(BIMV). The circle represents the FLRW equilibrium point, while the star represents the $\mathcal{BI_{MV}}$ equilibrium point. The numerical integration was completed with $\eta_{0} = 2$,  $\xi_{0} = 0$, and $w = 0.40$.}
\includegraphics*[scale = 0.60]{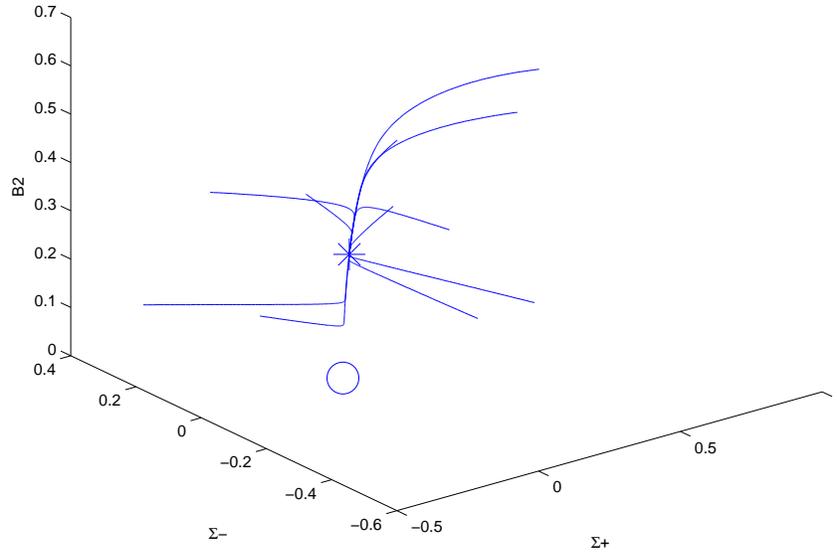}
\end{center}
\end{figure}
\newpage

\subsection{The General Case - Extending the Phase Space}
As discussed earlier, our work up to this point has assumed that the magnetic field is aligned along the shear eigenvector. The result of this approach was seen in Eq. \eqref{eq:Rdefs}, where to avoid $R_{1}, R_{2}$ or $R_{3}$ becoming singular we set $\mathcal{B}_{1} = \mathcal{B}_{3} = 0$, and $\mathcal{B}_{2} \neq 0$. Of course, this is not the most general case. 

For a Bianchi Type I universe with a magnetic field source, one can also consider the case for which the magnetic field is not a shear eigenvector as was done for the perfect fluid case by LeBlanc \cite{leblanc2}. The result of this approach is that the dynamical system is six-dimensional to accommodate additional non-diagonal shear components compared to just three dimensions with no non-diagonal shear components as is the case in our work. This extension of the phase space leads to dynamical equations that are indeed smooth over all phase space, with $R_{1}, R_{2}, R_{3}$ being continuous in general. 

With respect to qualitative behaviour, in LeBlanc's extended approach, he also obtains a flat FLRW equilibrium point, a new solution the Einstein field equations (via a previously undiscovered equilibrium point) and the Kasner vacuum (Page 2287, \cite{leblanc2}). He also concludes that a possible late-time future asymptotic state is a flat FLRW model (Page 2290, \cite{leblanc2}). Finally, LeBlanc also concludes that the Kasner circle is a past attractor (Page 2292, \cite{leblanc2}). Although LeBlanc obtains additional equilibrium points which is natural given the extension of the phase space dimension, the asymptotic qualitative behaviour he finds is the same as we have found in our work.

\section{A Numerical Analysis}

The goal of this section is to complement the preceding stability analysis of the equilibrium points with extensive numerical experiments in order to confirm that the local results are in fact global in nature. For each numerical simulation, we chose the initial conditions such that the constraint Eq. \eqref{eq4:b1evolutionsys} in addition to $\mathcal{B}_{2} \geq 0$ were satisfied.  Although numerical integrations were done from $0 \leq \tau \leq 3000$, for demonstration purposes we present solutions for shorter time intervals. We completed numerical integrations of the dynamical system for physically interesting cases of $w$ equal to $0$ (dust), $0.325$ (a dust/radiation mixture), and $1/3$ (radiation). Also note that in the subsequent plots, asterisks denote initial conditions. The actual initial conditions used can be found in the Appendix.
\newpage
\subsection{Dust Models: $w=0$}
\subsubsection{$\xi_{0} = 0.1, \eta_{0} = 0.2$}
\begin{figure}[H]
\begin{center}
\label{fig:fig2}
\caption{This figure shows the dynamical system behavior for $\xi_{0} = 0.1 $, $\eta_{0} = 0.2$, and $w = 0$. The diamond indicates the FLRW equilibrium point, and this numerical solution shows that it is a local sink of the dynamical system.  The model also isotropizes as can be seen from the last figure, where $\Sigma_{\pm}, \mathcal{B}_{2} \to 0$ as $\tau \to \infty$.}
\includegraphics*[scale = 0.60]{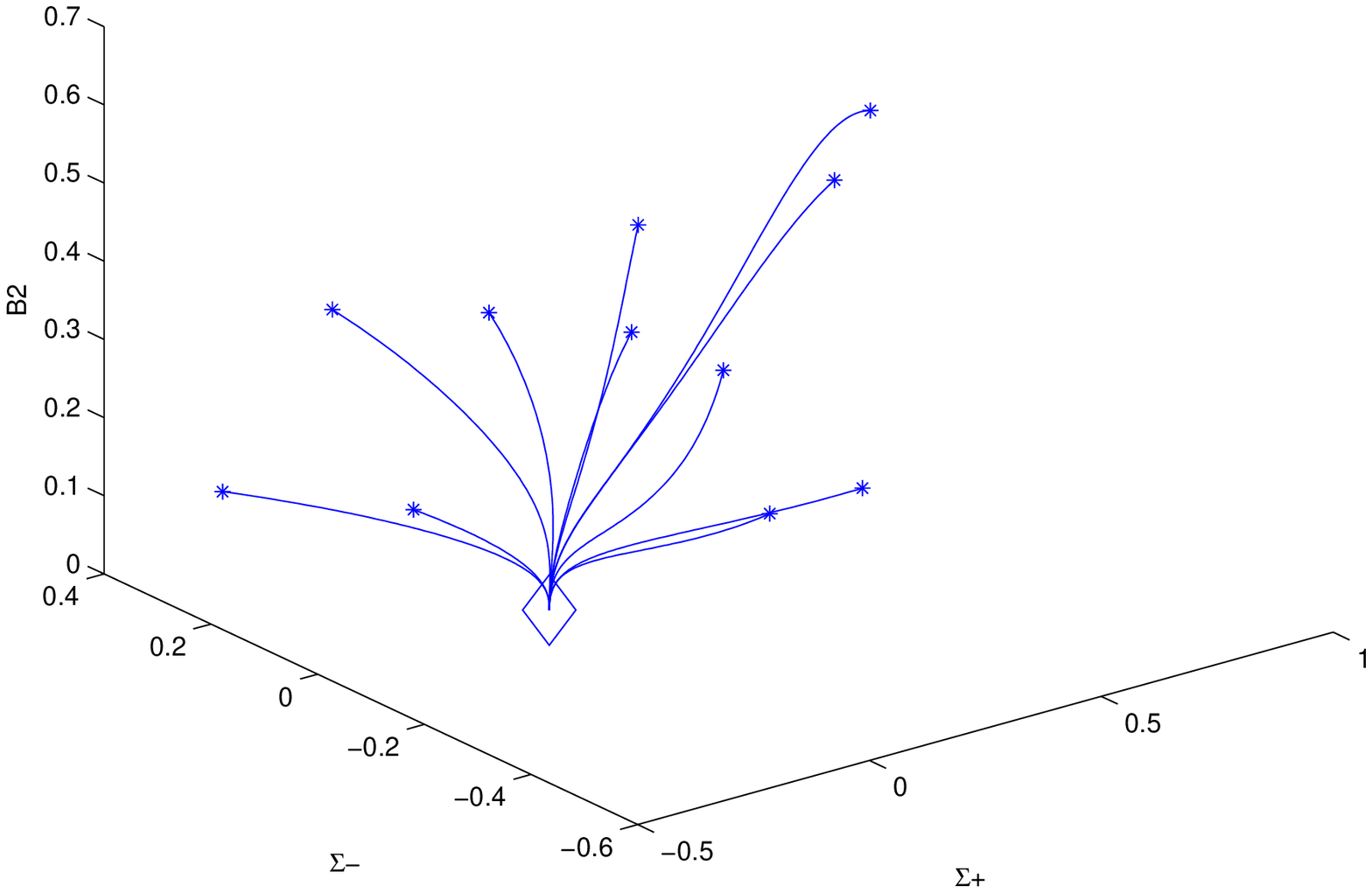} \\
\includegraphics*[scale = 0.60]{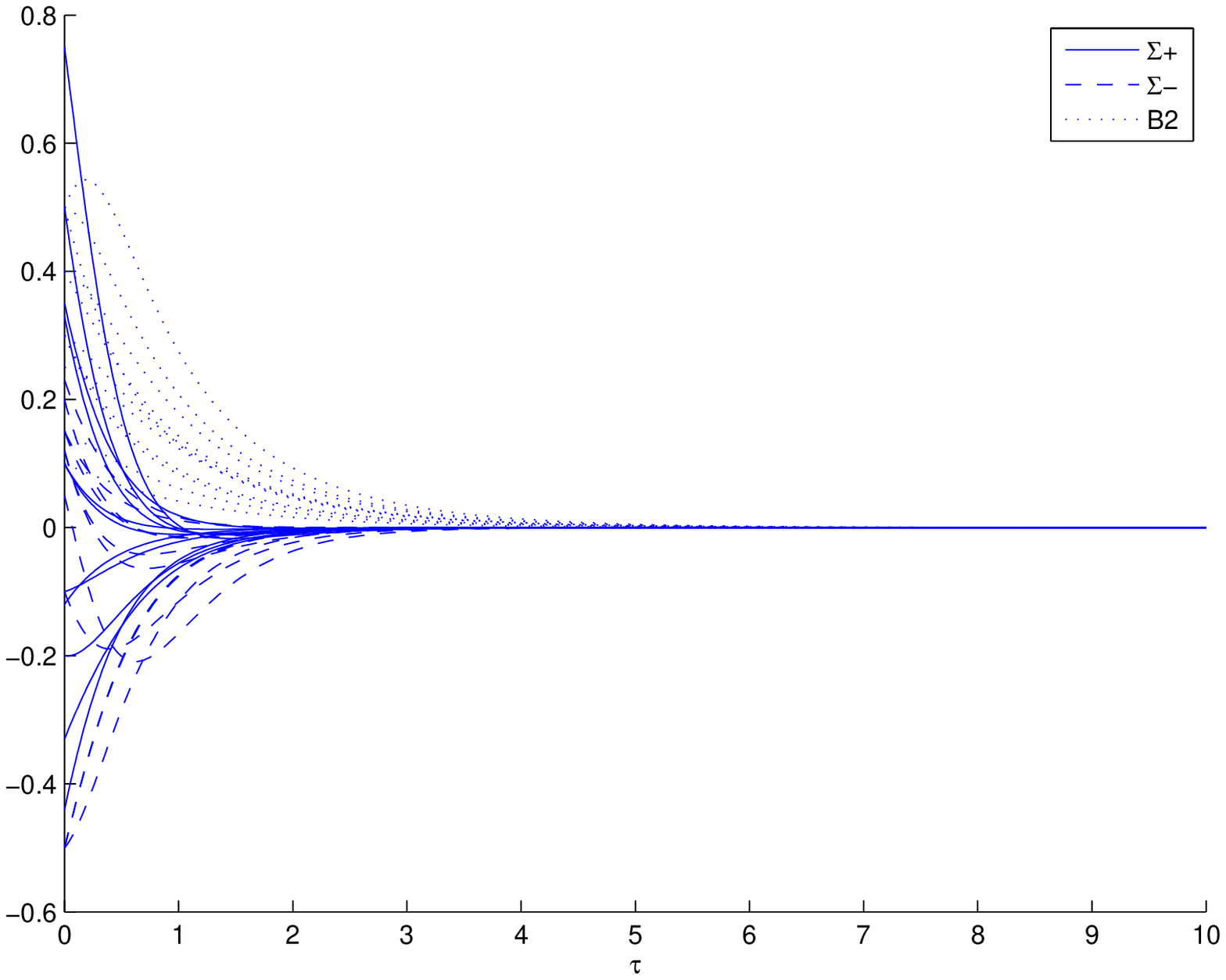}
%\linebreak
%\linebreak
%\includegraphics*[scale = 0.45]{fig19}
%\includegraphics*[scale = 0.45]{fig20}
\end{center}
\end{figure}
\newpage

\newpage
\subsubsection{$\xi_{0} = 1, \eta_{0} = 0.5$}
\begin{figure}[H]
\begin{center}
\label{fig:fig3}
\caption{This figure shows the dynamical system behavior for $\xi_{0} = 1 $, $\eta_{0} = 0.5$, and $w = 0$. The diamond indicates the FLRW equilibrium point, and this numerical solution shows that it is a local sink of the dynamical system.  The model also isotropizes as can be seen from the last figure, where $\Sigma_{\pm}, \mathcal{B}_{2} \to 0$ as $\tau \to \infty$.}
\includegraphics*[scale = 0.60]{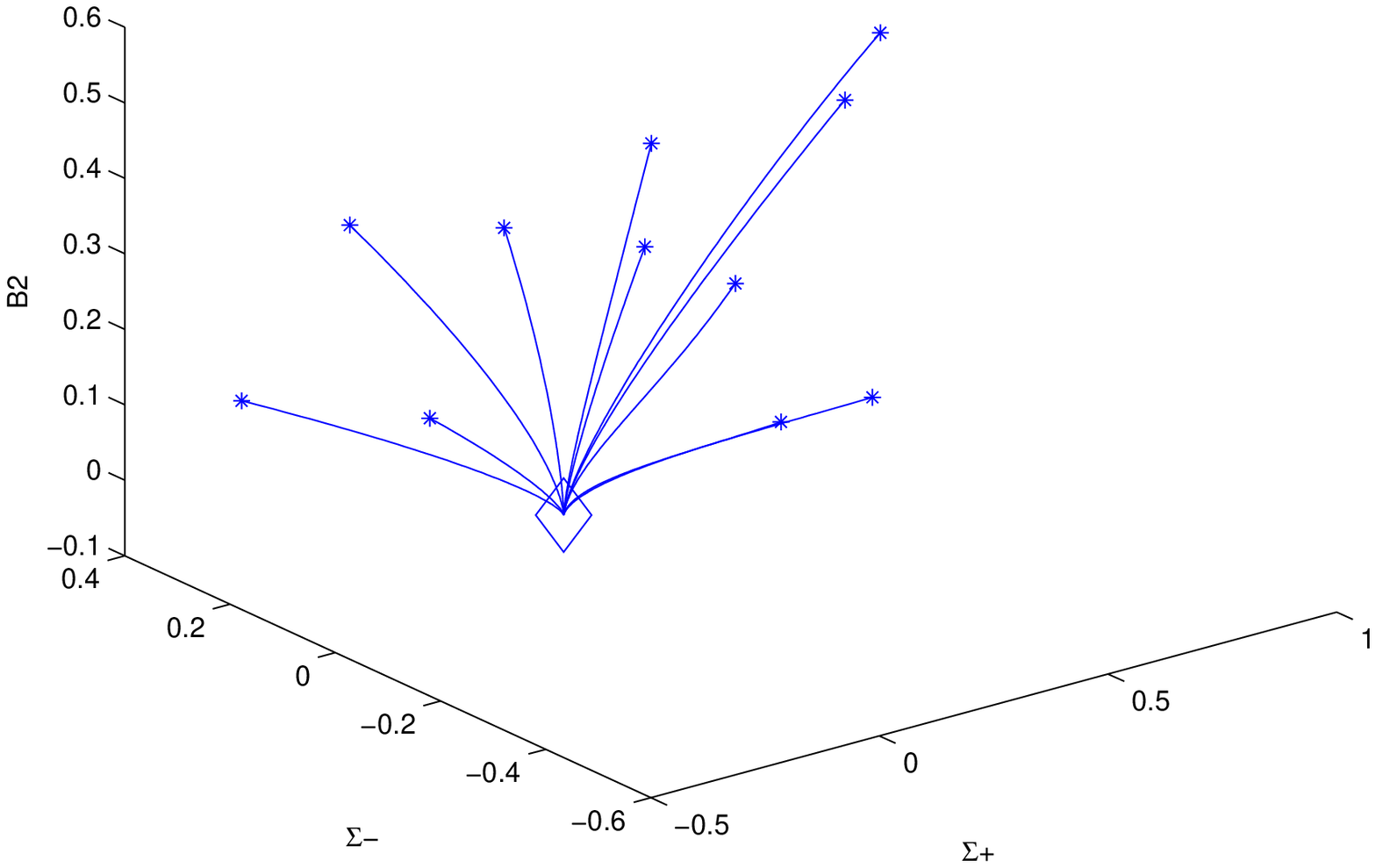} \\
\includegraphics*[scale = 0.60]{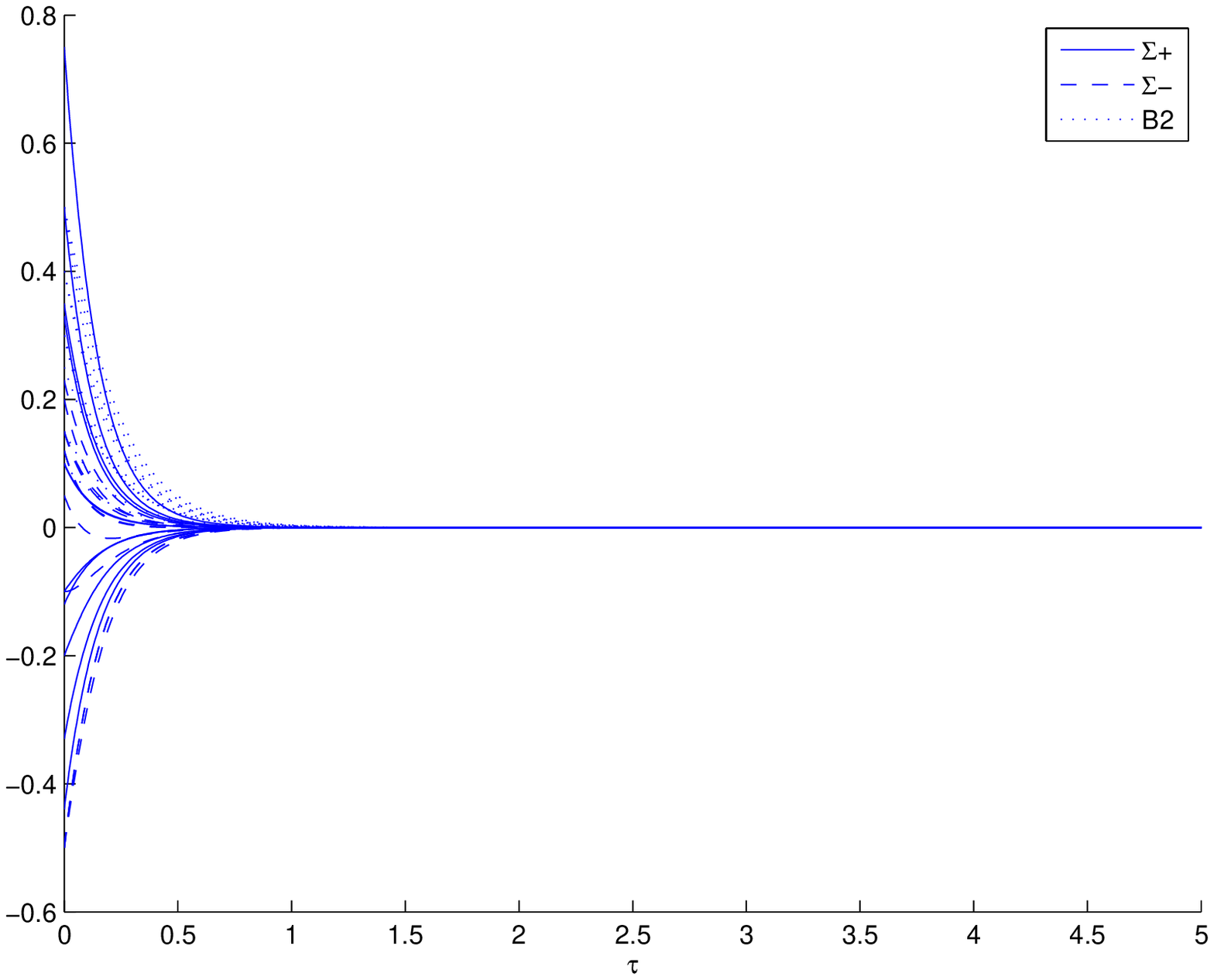}
\end{center}
\end{figure}

\newpage
\subsection{Radiation Models: $w = 1/3$}
\subsubsection{$\xi_{0} = 1.5, \eta_{0} =0$}
\begin{figure}[H]
\begin{center}
\label{fig:fig4}
\caption{This figure shows the dynamical system behavior for $\xi_{0} = 1.5$, $\eta_{0} = 0$, and $w = 1/3$. The diamond indicates the FLRW equilibrium point, and this numerical solution shows that it is a local sink of the dynamical system.  The model also isotropizes as can be seen from the last figure, where $\Sigma_{\pm}, \mathcal{B}_{2} \to 0$ as $\tau \to \infty$.}
\includegraphics*[scale = 0.60]{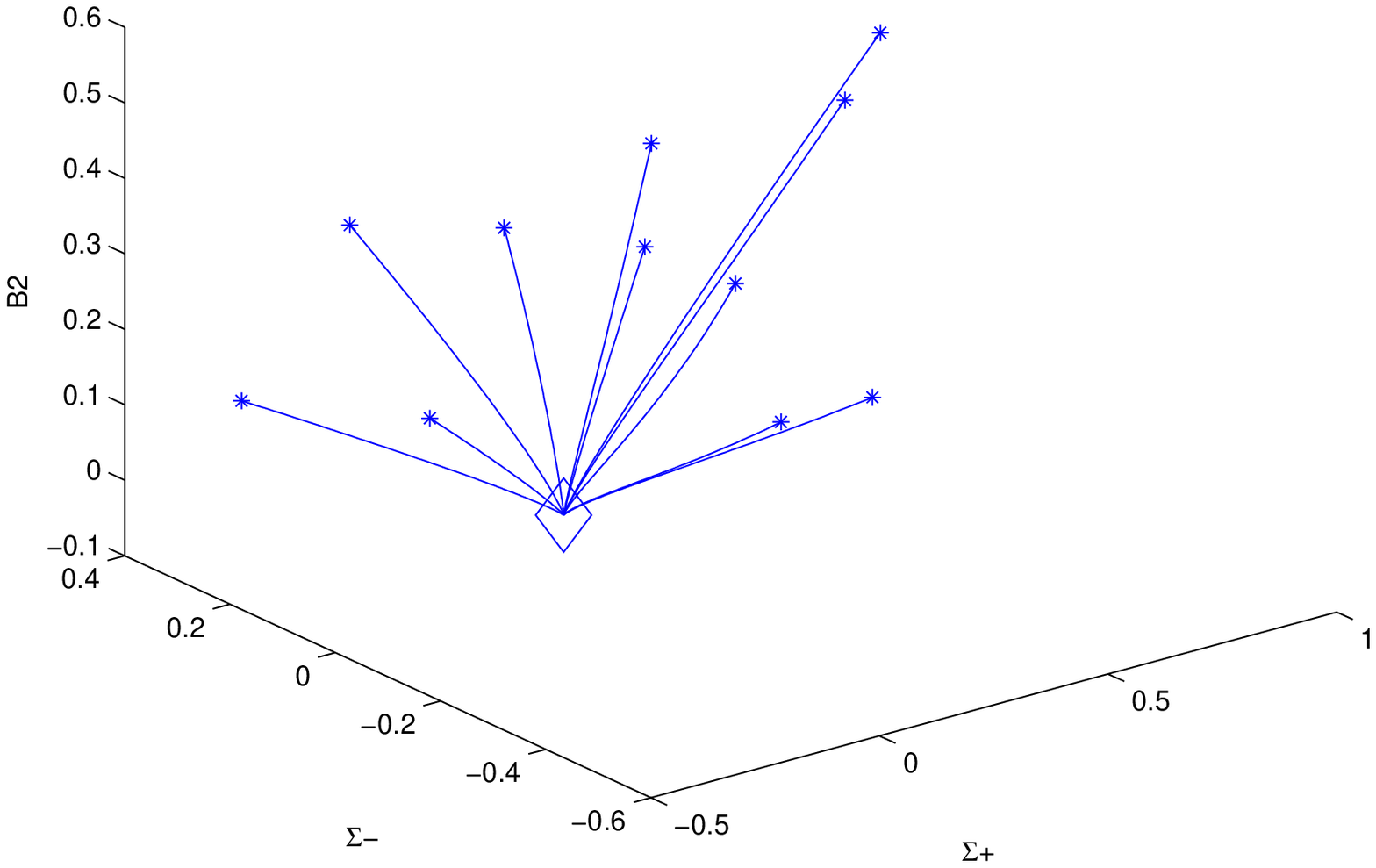} \\
\includegraphics*[scale = 0.60]{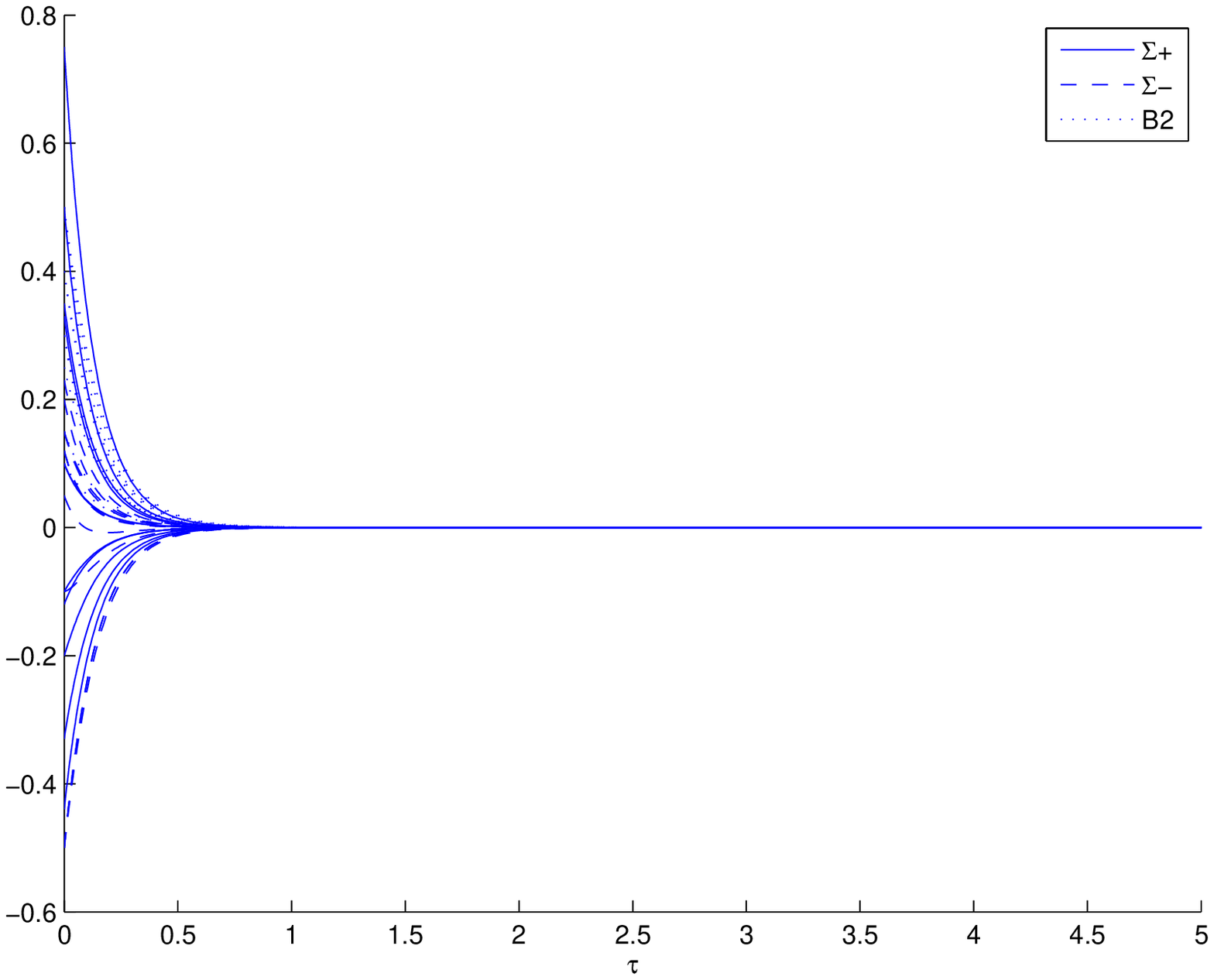}
\end{center}
\end{figure}

\newpage
\subsubsection{$\xi_{0} = 1.5, \eta_{0} = 0.5$}
\begin{figure}[H]
\begin{center}
\label{fig:fig5}
\caption{This figure shows the dynamical system behavior for $\xi_{0} = 1.5$, $\eta_{0} = 0.5$, and $w = 1/3$. The diamond indicates the FLRW equilibrium point, and this numerical solution shows that it is a local sink of the dynamical system.  The model also isotropizes as can be seen from the last figure, where $\Sigma_{\pm}, \mathcal{B}_{2} \to 0$ as $\tau \to \infty$.}
\includegraphics*[scale = 0.60]{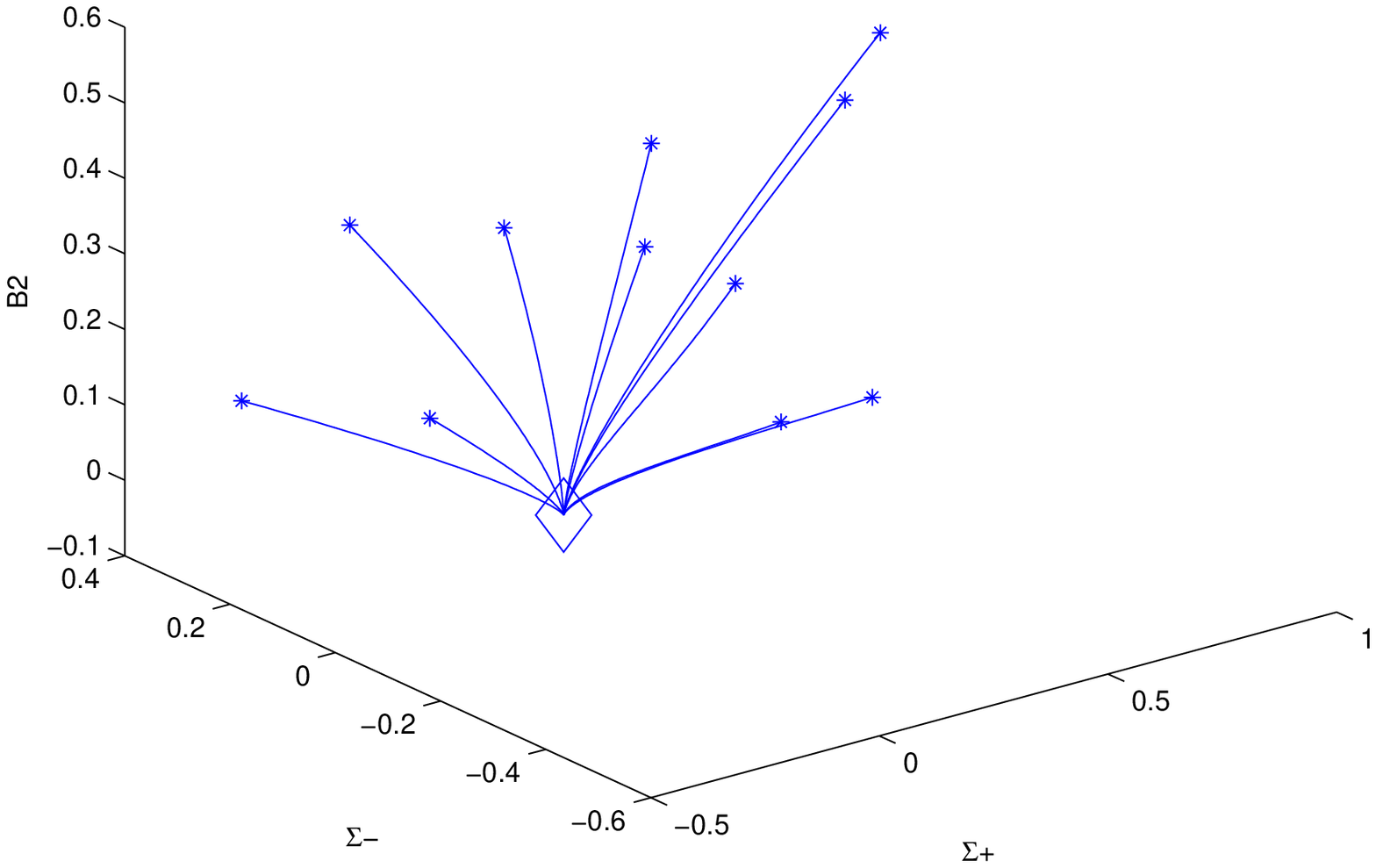} \\
\includegraphics*[scale = 0.60]{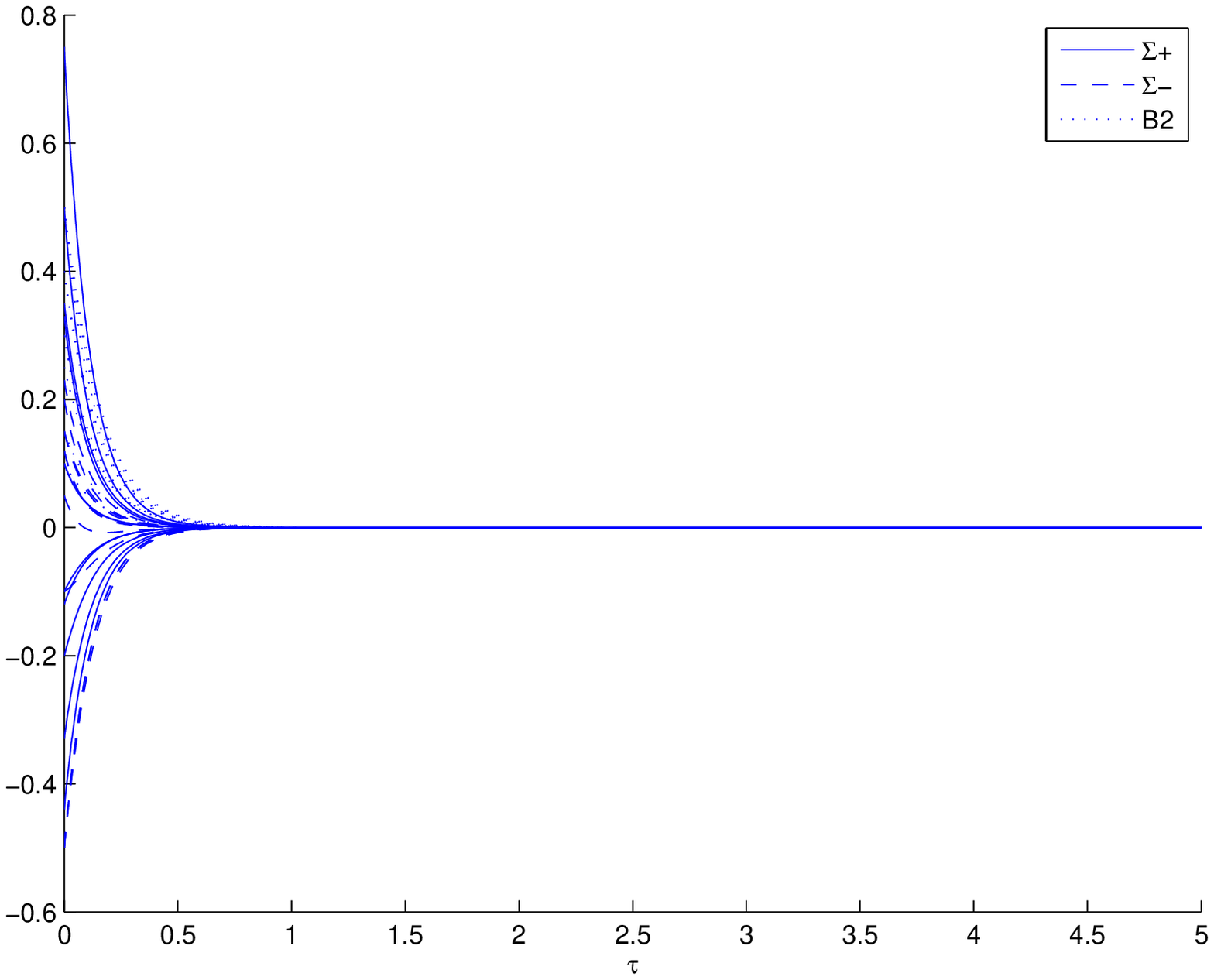}
\end{center}
\end{figure}

\newpage
\subsubsection{$\xi_{0} = 0, \eta_{0} = 2$}
\begin{figure}[H]
\begin{center}
\label{fig:fig6}
\caption{This figure shows the dynamical system behavior for $\xi_{0} = 0$, $\eta_{0} = 2$, and $w = 1/3$. The circle indicates the BIMV equilibrium point, and this numerical solution shows that it is a local sink of the dynamical system.  The model does not isotropize with respect to the anisotropic magnetic field as can be seen from the last figure, where $\mathcal_{B}_{2} > 0$ as $\tau \to \infty$, but does isotropize with respect to the spatial anisotropic variables, $\Sigma_{\pm},  \to 0$ as $\tau \to \infty$. This state is also special, since according to our fixed-point analysis, this behavior is only exhibited for $w = 1/3, \eta_{0} > 3/2$, and $\xi_{0} = 0$}. 
\includegraphics*[scale = 0.60]{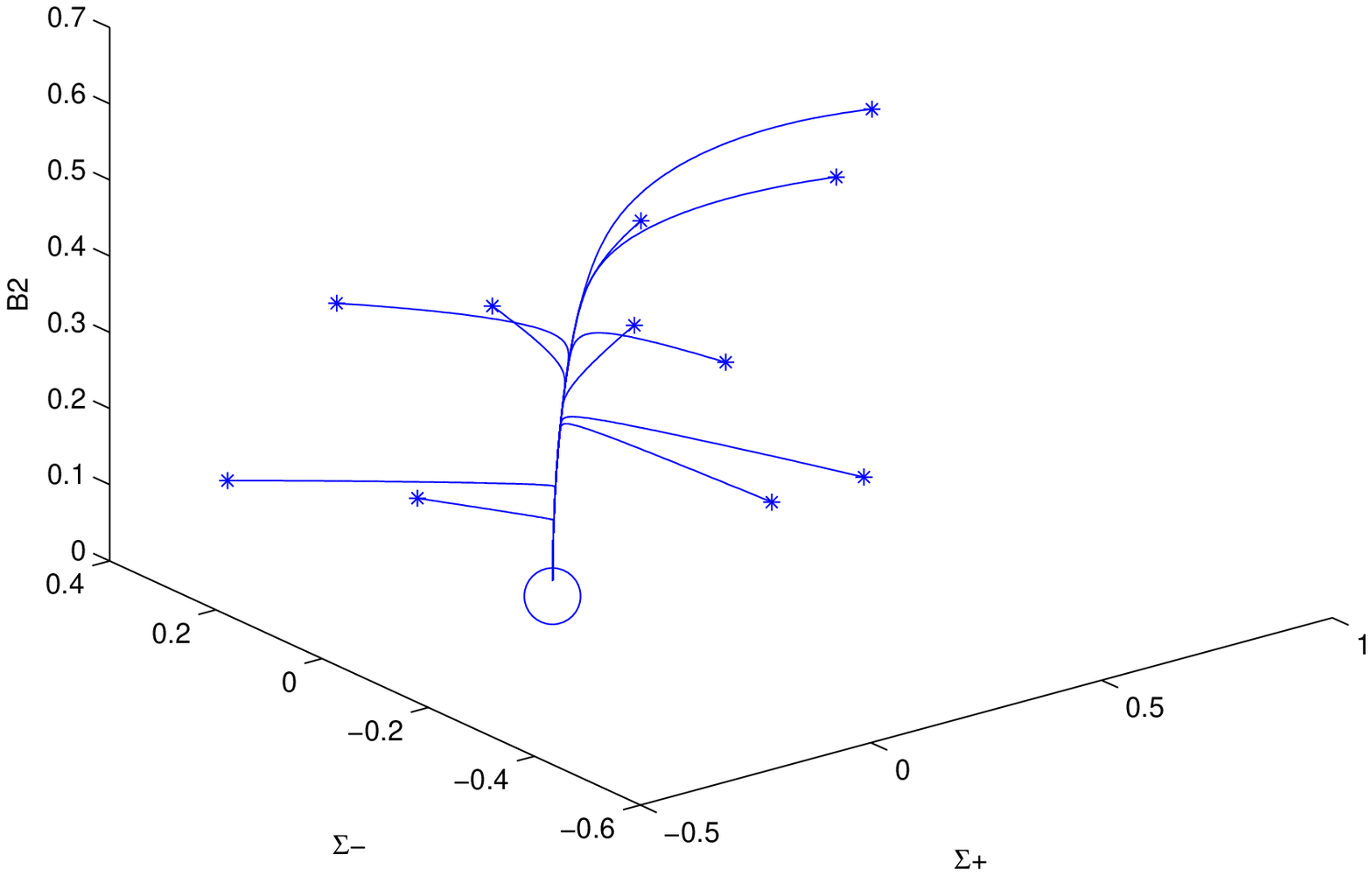} \\
\includegraphics*[scale = 0.60]{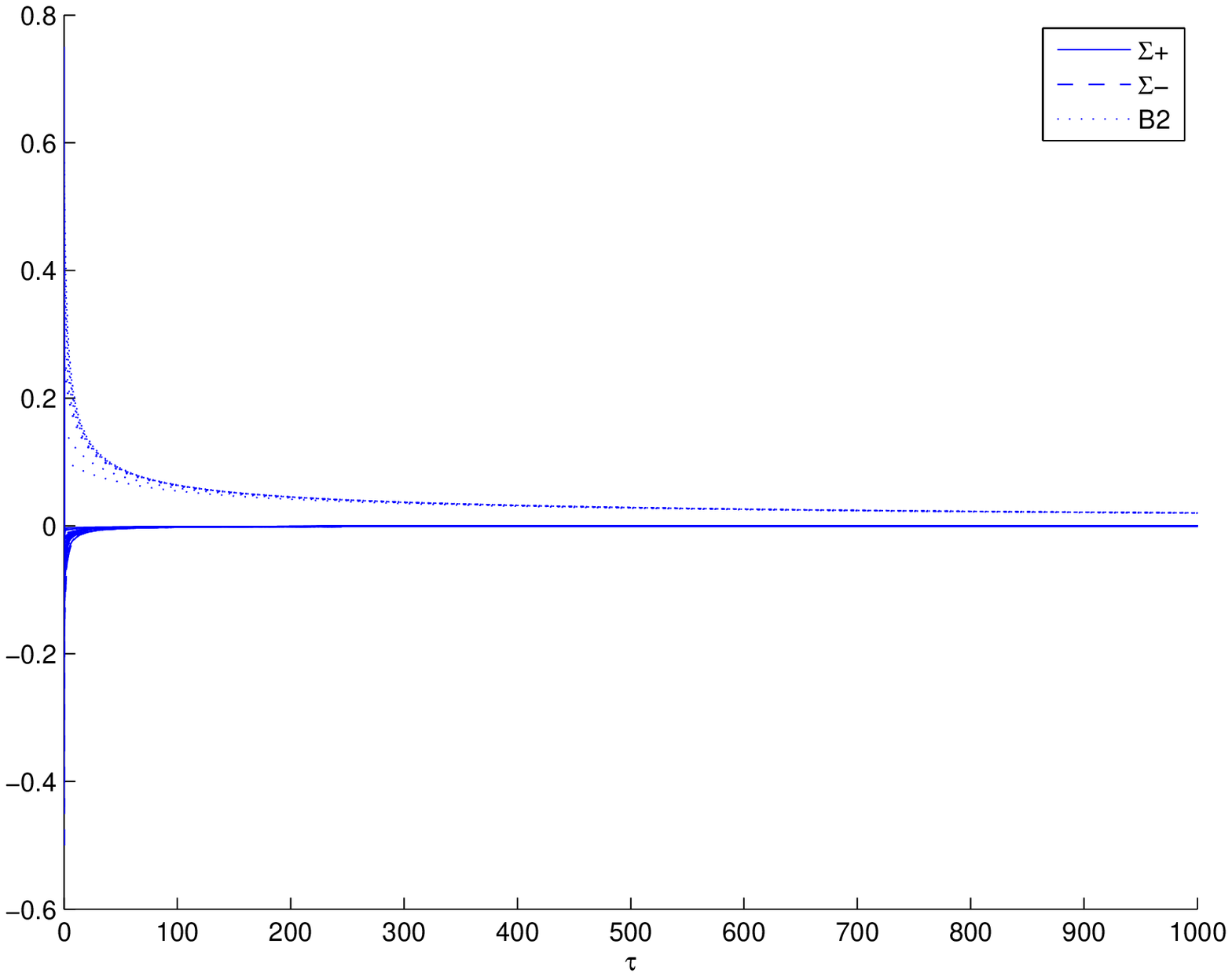}
\end{center}
\end{figure}

\newpage
\subsubsection{$\xi_{0} = 0, \eta_{0} = 10$}
\begin{figure}[H]
\begin{center}
\label{fig:fig7}
\caption{This figure shows the dynamical system behavior for $\xi_{0} = 0$, $\eta_{0} = 10$, and $w = 1/3$. The circle indicates the BIMV equilibrium point, and this numerical solution shows that it is a local sink of the dynamical system.  The model does not isotropize with respect to the anisotropic magnetic field as can be seen from the last figure, where $\mathcal_{B}_{2} > 0$ as $\tau \to \infty$, but does isotropize with respect to the spatial anisotropic variables, $\Sigma_{\pm},  \to 0$ as $\tau \to \infty$. This state is also special, since according to our fixed-point analysis, this behavior is only exhibited for $w = 1/3, \eta_{0} > 3/2$, and $\xi_{0} = 0$}.
\includegraphics*[scale = 0.60]{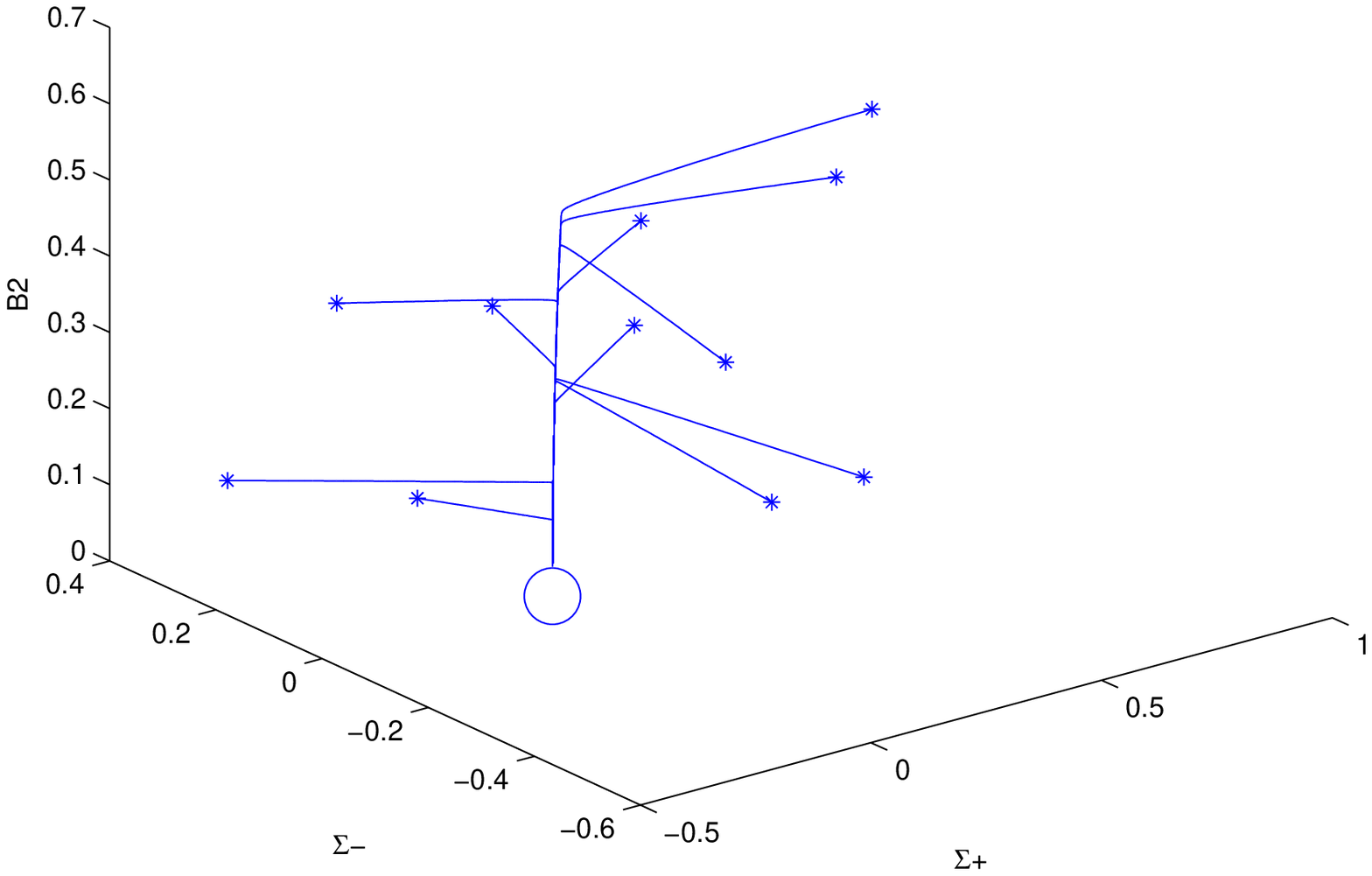} \\
\includegraphics*[scale = 0.60]{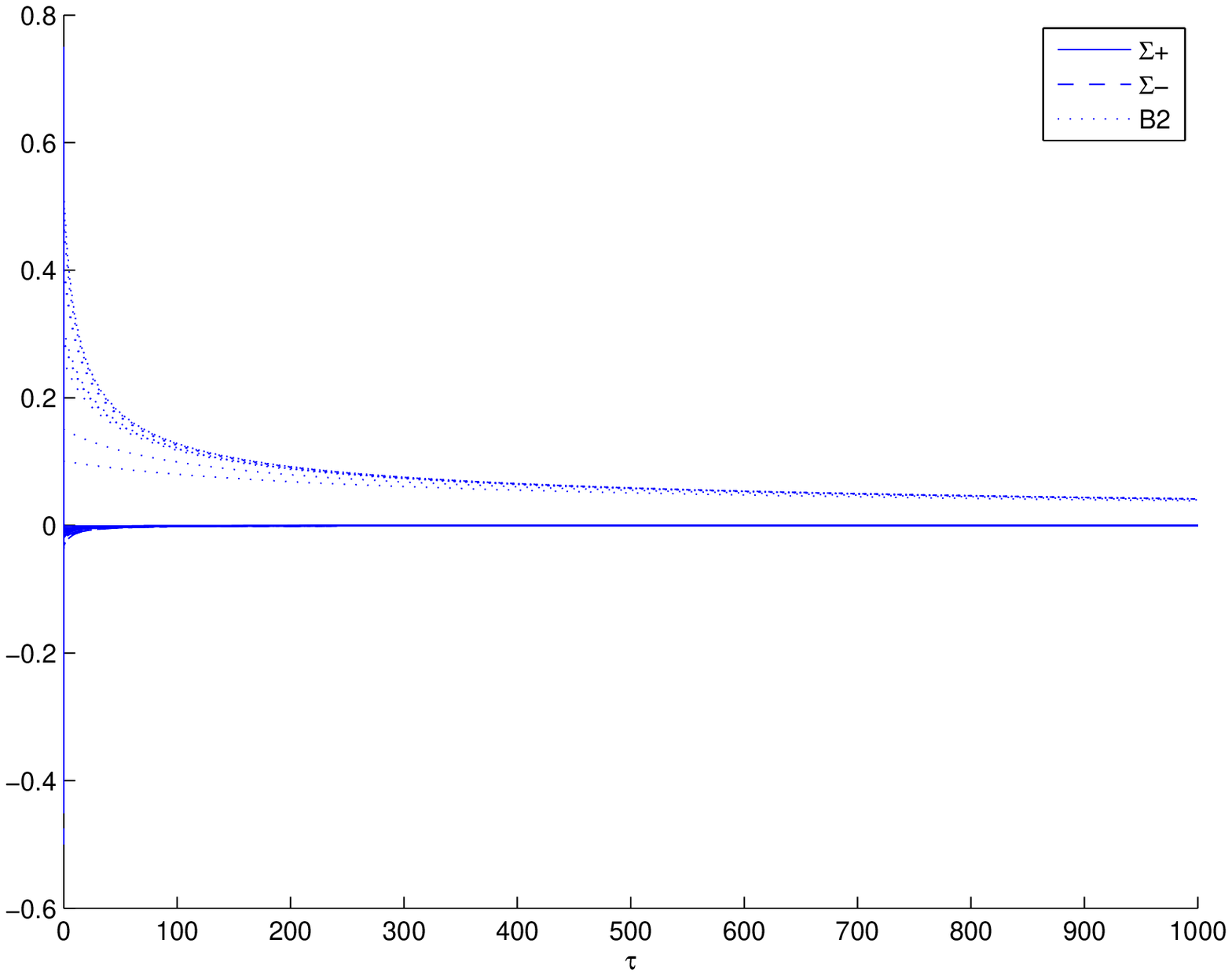}
\end{center}
\end{figure}

\newpage
\subsection{Dust/Radiation Models: $w = 0.325$}
\subsubsection{$\xi_{0} = 0.5, \eta_{0} = 0.5$}
\begin{figure}[H]
\begin{center}
\label{fig:fig8}
\caption{This figure shows the dynamical system behavior for $\xi_{0} = 0.5$, $\eta_{0} = 0.5$, and $w = 0.325$. The diamond indicates the FLRW equilibrium point, and this numerical solution shows that it is a local sink of the dynamical system.  The model also isotropizes as can be seen from the last figure, where $\Sigma_{\pm}, \mathcal{B}_{2} \to 0$ as $\tau \to \infty$.}. 
\includegraphics*[scale = 0.60]{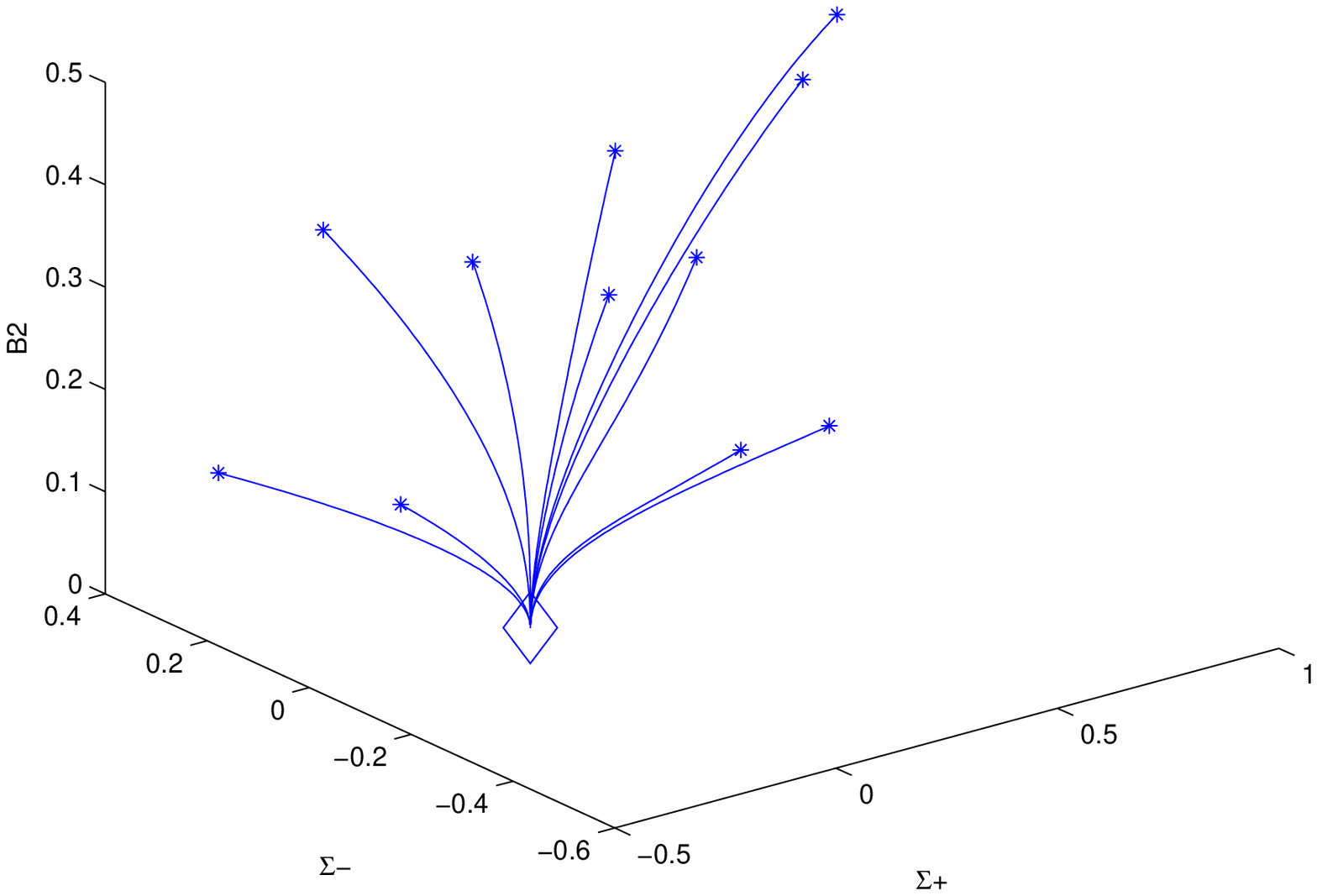} \\
\includegraphics*[scale = 0.60]{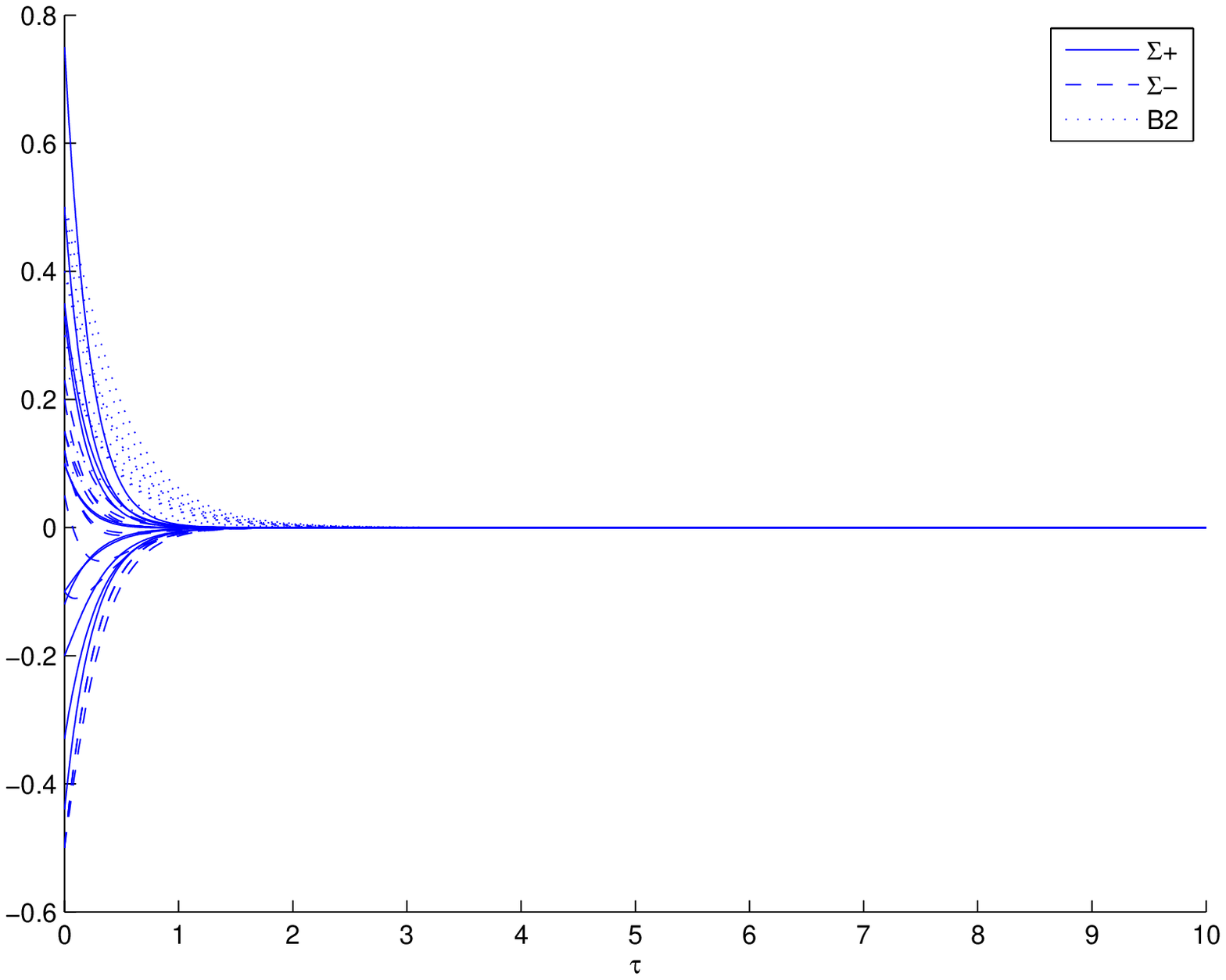}
\end{center}
\end{figure}

\newpage
\subsection{Dust/Radiation Models: $w = 0.325$}
\subsubsection{$\xi_{0} = 1, \eta_{0} = 2$}
\begin{figure}[H]
\begin{center}
\label{fig:fig9}
\caption{This figure shows the dynamical system behavior for $\xi_{0} = 1$, $\eta_{0} = 2$, and $w = 0.325$. The diamond indicates the FLRW equilibrium point, and this numerical solution shows that it is a local sink of the dynamical system.  The model also isotropizes as can be seen from the last figure, where $\Sigma_{\pm}, \mathcal{B}_{2} \to 0$ as $\tau \to \infty$.}. 
\includegraphics*[scale = 0.60]{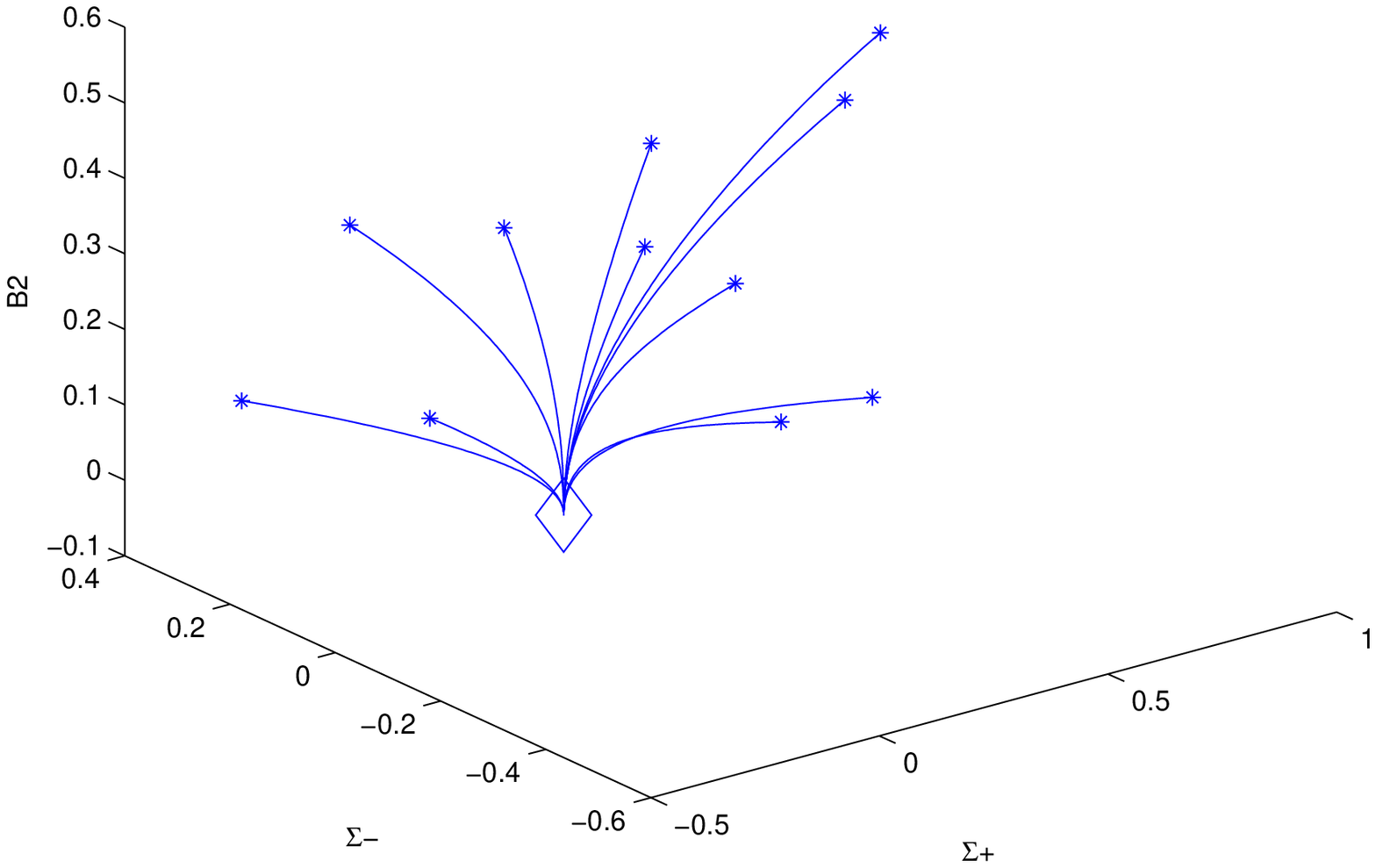} \\
\includegraphics*[scale = 0.60]{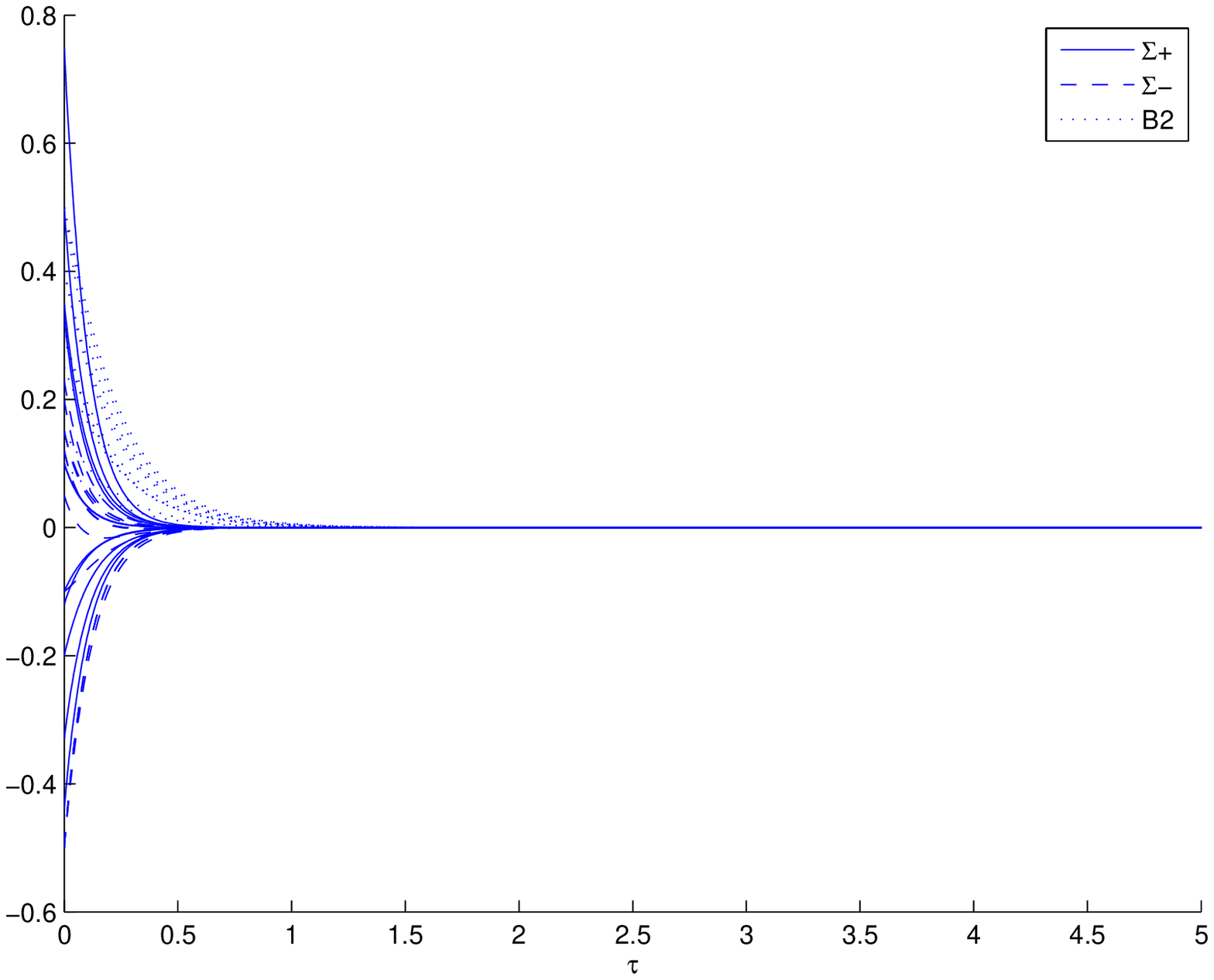}
\end{center}
\end{figure}

%%%%%%%%%%%%%%%%%%%%%%%%%%%%%%%%%%%%%%%%%%%%%%%%%%%%%%%%%%%%%%%%%%%%%%%%%%%%%%%%%%%%%%%
%%%%%%%%%%%%%%%%%%%%%%%%%%%%%%%%%%%%%%%%%%%%%%%%%%%%%%%%%%%%%%%%%%%%%%%%%%%%%%%%%%%%%%%

\newpage
\section{Conclusions}

We have presented in this paper a comprehensive analysis of the dynamical behavior of a Bianchi Type I viscous magnetohydrodynamic cosmology, using a variety of techniques ranging from a fixed point analysis to analyzing asymptotic behavior using standard dynamical systems theory combined with numerical experiments. We have shown that the fixed points may be associated with Kasner-type solutions, a flat universe FLRW solution, and interestingly, a new solution to the Einstein Field equations involving non-zero magnetic fields, and non-zero viscous coefficients.

For cases in which $\eta_{0} \geq 0, \quad \xi_0 \geq 0, \quad -1 \leq w < 1/3$ or $\eta_{0} \geq 0, \quad 1/3 \leq w \leq 1, \quad \xi_{0} > \left(3w-1\right)/9$, the dynamical model isotropizes asymptotically; that is, the spatial anisotropy and the anisotropic magnetic field decay to negligible values giving a close approximation to the present-day universe. We were also able to show that for regions in which $\eta_{0} > 3/2, \xi_{0} =0, w =1/3$ or $\eta_{0} > 3/2, \quad  1/3 < w < (6\eta_{0}-5)/(6\eta_{0}+3), \quad 0 \leq \xi_{0} \leq \left(3w-1\right)/9$, the model does not isotropize, rather at late times goes into a stable equilibrium in which there is a non-zero magnetic field. 

The flat FLRW model whose associated equilibrium point was denoted by $\mathcal{F}$, is of primary importance with respect to models of the present day universe. Through our fixed point analysis, we showed that $\mathcal{F}$ represents a saddle point if $\eta_{0} = 0, \quad 1/3 < w < 1, \quad 0 \leq \xi_{0} < (3w-1)/9$, $\eta_{0} = 0, \quad w = 1, \quad 0 < \xi_{0} < 2/9$, or $\eta_{0} > 0, \quad 1/3 < w \leq 1, \quad 0 \leq \xi_{0} < \left(3w-1\right)/9$, (which was denoted above by SA(F)). In these regions, $\mathcal{F}$ attracts along its stable manifold and repels along its unstable manifold. More precisely,  the stable manifold $W^{s}$ of the equilibrium point $\mathcal{F}$, is tangent to the stable subspace $E^{s}$ at $\mathcal{F}$ such that all orbits in $W^{s}$ approach $\mathcal{F}$ as $\tau \to \infty$. Similarly, there exists an unstable manifold $W^{u}$ of $\mathcal{F}$ such that it is tangent to the unstable subspace $E^{u}$ at $\mathcal{F}$ and such that all orbits in $W^{u}$ will approach $\mathcal{F}$ as $\tau \to -\infty$. Therefore, in the region denoted by SA(F), some orbits will have an initial attraction to $\mathcal{F}$, but will eventually be repelled by it. In the region denoted by S1(F), the point $\mathcal{F}$ is a local sink, and as such $\mathcal{F}$ attracts along its stable manifold, where the stable manifold $W^{s}$ of the equilibrium point $\mathcal{F}$, is tangent to the stable subspace $E^{s}$ at $\mathcal{F}$ such that all orbits in $W^{s}$ approach $\mathcal{F}$ as $\tau \to \infty$. There is therefore a time period, and two possible configurations for which the cosmological model will asymptotically isotropize, and be compatible with present-day observations of high-degree isotropy.

%%%%%%%%%%%%%%%%%%%%%%%%%%%%%%%%%%%%%%%%%%%%%%%%%%%%%%%%%%%%%%%%%%%%%%%%%%%%%%%%%%%%%%%
%%%%%%%%%%%%%%%%%%%%%%%%%%%%%%%%%%%%%%%%%%%%%%%%%%%%%%%%%%%%%%%%%%%%%%%%%%%%%%%%%%%%%%%

\section{Appendix}
\subsection{Jacobian Matrix for $\mathcal{BI_{MV}}$}

The Jacobian matrix for equilibrium point 3 is
\begin{equation}
\label{eq:Jacobian} J = \frac{1}{128\alpha}
\left[\begin{array}{ccc}
-(\alpha\mu_1+\mu_2\gamma) &  -\sqrt{3}\mu_3(\beta_1+\gamma)^2/(2\alpha) & 
\sqrt{3}(\beta_2-\gamma)^{1/2}(\alpha\mu_4+\mu_5\gamma)\\
-\sqrt{3}\mu_3(\beta_1+\gamma)^2/(2\alpha) &  -(\alpha\mu_6+\mu_7\gamma)  & 
3(\beta_2-\gamma)^{1/2}(\alpha\mu_4+\mu_5\gamma)\\
2\sqrt{3}(\beta_2-\gamma)^{1/2}(\alpha^2\mu_5+\mu_3\gamma)/3 &  
2(\beta_2-\gamma)^{1/2}(\alpha^2\mu_5+\mu_3\gamma) & 4\alpha\mu_5(\gamma-\beta_2)
\end{array}\right], 
\end{equation}
where, in addition to the definition of parameters in equations \eqref{eq:alpha}, \eqref{eq:beta1}, \eqref{eq:beta2} and \eqref{eq:gamma}, we define
\begin{equation}
\label{eq:mu_1_mu_2}
\mu_1 = 2\beta_1 -3(w-1)\beta_2 -144w(1+2\eta_0)-16(13-22\eta_0),
\quad
\mu_2 = 9 w^2 (1+2\eta_{0} ) + 12w(1-2\eta_{0}) -53+6\eta_0,
\end{equation}
\begin{equation}
\label{eq:mu_3_mu_4_mu_5}
\mu_3 = 3(w-1),\quad
\mu_4 = 9 w^2 (1+2\eta_{0} ) + 6w(3-2\eta_{0}) -39+2\eta_0,\quad
\mu_5 = 3w-1,
\end{equation}
\begin{equation}
\label{eq:mu_6_mu_7}
\mu_6 = 6\beta_1 -9(w-1)\beta_2 -240w(1+2\eta_0)-16(27-26\eta_0),\quad
\mu_7 = 27 w^2 (1+2\eta_{0} ) + 36w(1-2\eta_{0}) -95+18\eta_0.
\end{equation}
On the bifurcation surface $\xi_0 = (3w-1)/9$ we have the simplifications $\gamma = \beta_2 = -\beta_1$ and $\alpha\mu_1+\mu_2\gamma = \alpha\mu_6+\mu_7\gamma = 128\alpha(1+2\eta_0)$, and thus the matrix $J$ is diagonal.

\subsection{Initial Values for Numerical Experiments}
 \begin{table}[h]
\begin{center}\begin{tabular}{|c|c|c|c|}\hline 
$\Sigma_{+}$ & $\Sigma_{-}$ & $\mathcal{B}_{2}$ & $\Omega_{f}$
\\\hline 
0.1 & 0.2 & 0.3 & 0.8150 \\ \hline
0.1 & -0.5 & 0.3 & 0.6050 \\ \hline
-0.1 & -0.5 & 0.3 & 0.6050 \\ \hline
-0.2 & -0.5 & 0.5 & 0.3350 \\ \hline
0.5 & -0.1 & 0.5 & 0.3650 \\ \hline
0.75 & 0.05 & 0.5 & 0.0600 \\ \hline
0.33 & 0.12 & 0.4 & 0.6367 \\ \hline
-0.33 & 0.12 & 0.4 & 0.6367 \\ \hline
-0.44 & 0.32 & 0.15 & 0.7198 \\ \hline
-0.12 & 0.15 & 0.1 & 0.9481 \\ \hline
0.35 & 0.15 & 0.25 & 0.7613 \\ \hline
0.99 & 0 & 0 & 0.0199\\ \hline
0.499 & -0.855 & 0  & 0.0200 \\ \hline
0 & -0.99 & 0 & 0.0199 \\ \hline
0 & 0.99 & 0 & 0.0199 \\ \hline
\end{tabular} 
\caption{Initial conditions used in the numerical experiments. Note that in each case, $0 \leq \Omega_{f} \leq 1$ and $\mathcal{B}_{2} \geq 0$ as required.}
\end{center}
\label{defaulttable}
\end{table}

%%%%%%%%%%%%%%%%%%%%%%%%%%%%%%%%%%%%%%%%%%%%%%%%%%%%%%%%%%%%%%%%%%%%%%%%%%%%%%%%%%%%%%%
%%%%%%%%%%%%%%%%%%%%%%%%%%%%%%%%%%%%%%%%%%%%%%%%%%%%%%%%%%%%%%%%%%%%%%%%%%%%%%%%%%%%%%%
\newpage
\section{Acknowledgements}
The authors gratefully acknowledge the support of the Natural Sciences and Research Council of Canada. We would also like to thank the referee for his/her helpful suggestions upon reviewing this paper.

\newpage %Just because of unusual number of tables stacked at end
\bibliography{sources}% Produces the bibliography via BibTeX.

\end{document}